\documentclass{article}

\usepackage{latexsym,amsmath,amssymb,url}
\usepackage{amsthm}

\DeclareMathOperator*{\argmax}{arg\,max}

\newtheorem{lemma}{Lemma}

\newtheorem{definition}{Definition}
\newtheorem{theorem}{Theorem}
\usepackage[ruled, linesnumbered]{algorithm2e}

\newcommand{\edgekerneltwo}{$8k$}
\newcommand{\edgekernel}{$16c^{2}k-6ck$}
\newcommand{\approxtwo}{$4+\varepsilon$}

\title{Parameterized and Approximation Algorithms for the Load Coloring Problem}
\author{F. Barbero, G. Gutin, M. Jones, and B. Sheng\\ {\small Royal Holloway, University of London}\\{\small TW20 0EX, Egham, Surrey, UK}}
\date{\today}

\begin{document}
\maketitle

\begin{abstract}
Let $c, k$ be two positive integers and let $G=(V,E)$ be a graph.
The $(c,k)$-Load Coloring Problem (denoted $(c,k)$-LCP) asks whether there is a $c$-coloring $\varphi: V \rightarrow [c]$ such that for every $i \in [c]$, there are at least $k$ edges with both endvertices colored $i$. 
Gutin and Jones (IPL 2014) studied this problem with $c=2$. They showed $(2,k)$-LCP to be fixed parameter tractable (FPT) with parameter $k$ by obtaining a kernel with at most $7k$ vertices. In this paper, we extend the study to any fixed $c$ by giving both a linear-vertex and a linear-edge kernel. In the particular case of $c=2$, we obtain a  kernel with less than $4k$ vertices and less than \edgekerneltwo ~edges. These results imply that for any fixed $c\ge 2$, $(c,k)$-LCP is FPT and that the optimization version of $(c,k)$-LCP (where $k$ is to be maximized) has an approximation algorithm with a constant ratio for any fixed $c\ge 2$. 
\end{abstract}

\section{Introduction}
Given a graph $G=(V,E)$ and an integer $k$, the 2-{\sc Load Coloring} Problem asks whether there is a coloring $\varphi: V \rightarrow \{1,2\}$ such that for $i=1$ and 2, there are at least $k$ edges with both endvertices colored $i$. This problem is NP-complete \cite{NABA2007}, and Gutin and Jones studied its parameterization by $k$ \cite{GJ2013}. They proved that 2-{\sc Load Coloring} is fixed-parameter tractable by obtaining a kernel with at most $7k$ vertices. It is natural to extend 2-{\sc Load Coloring} to any number $c$ of colors as follows. Henceforth, for a positive integer $p$, $[p]=\{1,2,\dots ,p\}.$

\begin{definition}[$(c,k)$-{\sc Load Coloring}]
Given a positive integer $c$, a nonnegative integer $k$ and graph $G=(V,E)$,
the $(c,k)$-{\sc Load Coloring} Problem asks whether there is a $c$-coloring $\varphi: V\rightarrow [c]$ such
that for every $i \in [c]$, there are at least $k$ edges with both endvertices colored $i$.
We write $G\in (c,k)$-LCP if such a $c$-coloring exists.
\end{definition}
 Observe first that $G\in (1,k)$-LCP if and only if $|E(G)| \ge k$. In this paper, we consider $(c,k)$-{\sc Load Coloring} parameterized by $k$ for every fixed $c\ge 2$. Note that $(c,k)$-{\sc Load Coloring} is NP-complete for every fixed $c\ge 2$. Indeed, we can reduce $(2,k)$-{\sc Load Coloring} to $(c,k)$-{\sc Load Coloring} with $c>2$ by taking the disjoint union of $G$ with $c-2$ stars $K_{1,k}$. 

 We prove that the problem admits a kernel with less than $2ck$ vertices. Thus, for $c=2$ we improve the kernel result of \cite{GJ2013}. To show our result, we introduce reduction rules, which are new even for $c=2$. We prove that the reduction rules can run in polynomial time and we show that a reduced graph with at least $2ck$ vertices is in $(c,k)$-LCP. 

While there are many parameterized graph problems which admit kernels linear in the number of vertices, usually only problems on classes of sparce graphs admit kernels linear in the number of edges (since in such graphs the number of edges is linear in the number of vertices), see, e.g., \cite{meta,DF2013,LMS2012}. To the best of our knowledge, only trivial $O(k)$-edge kernels for general graphs have been described in the literature, e.g., the kernel for {\sc Max Cut} parameterized by solution size. Thus, our next result is somewhat surprising: $(c,k)$-{\sc Load Coloring} admits a kernel with $O(k)$ edges for every fixed $c\ge 2$. In fact, $(2,k)$-{\sc Load Coloring} has a kernel with less than \edgekerneltwo ~edges and for every $c\ge 2$, $(c,k)$-{\sc Load Coloring} has a kernel with less than \edgekernel ~edges. 

The optimization version of $(c,k)$-{\sc Load cColoring}, called the {\sc $c$-Load Coloring} Problem, is as follows: for a graph $G$ and an integer $c\ge 2$, find the maximum $k$ such that $G\in (c,k)$-LCP.  The above bounds on the number of edges in the kernel lead to approximation algorithms for this optimization problem:
a (\approxtwo)-approximation for $c=2$ and a constant ratio approximation for $c>2$. 

The paper is organized as follows. In Section \ref{sec:tn}, we provide additional terminology and notation. In Section \ref{sec:vk}, we show that the problem admits a kernel with less than $2ck$ vertices. In Section \ref{sec:ek}, we prove an upper bound on the number of edges in a kernel for every $c\ge 2$ and the corresponding approximation result for {\sc $c$-Load Coloring}. 
We improve our bound for $c=2$ in Section \ref{sec:ek2}. The bound implies the approximation ratio of \approxtwo ~for every $\varepsilon >0$.
We complete the paper with discussions in Section \ref{sec:d}.

\section{Terminology and Notation}\label{sec:tn}
\paragraph{Graphs.} For a graph $G$, $V(G)$ ($E(G)$, respectively) denotes the vertex (edge, respectively) set of $G$, 
$\Delta(G)$ denotes the maximum degree of $G$, $n$ its number of vertices, and $m$ its number of edges. 
For a vertex $x$ and vertex set $X$ in $G$, $N(x)=\{y: xy\in E(G)\}$ and $N_X(x) = N(x) \cap X$.
For disjoint vertex sets $X,Y$ of $G$, let $G[X]$ be the subgraph of $G$ induced by $X$, $E(X)=E(G[X])$ and $E(X,Y)=\{xy\in E(G):\ x\in X, y\in Y\}.$
A vertex $u$ with degree 0 (1, respectively) is an \textit{isolated vertex} (a \textit{leaf-neighbor of $v$}, where $uv\in E(G),$ respectively).
For a coloring $\varphi$, we say that an edge $uv$ is \textit{colored $i$} if $\varphi(u)=\varphi(v)=i$.\\

\paragraph{Parameterized complexity.}A parameterized problem is a subset $L\subseteq \sum^{*}\times \mathbb{N}$ over a finite alphabet $\sum$. $L$ is {\em fixed-parameter tractable} (FPT) if the membership of an instance $(x,k)$ in $\sum^{*}\times \mathbb{N}$ can be decided in time $f(k)|x|^{O(1)}$, where $f$ is a computable function of the parameter $k$ only. A $\textit{kernelization}$ of a parameterized problem $L$ is a polynomial-time algorithm that maps an instance $(x, k)$ to an instance $(x', k')$, the \textit{kernel}, such that $(x, k)\in L$ if and only if $(x', k') \in L$, $k' \le g(k)$, and $|x'| \leq g(k)$ for some function $g$ of $k$ only. 
We call $g(k)$ the {\em size} of the kernel.\\

It is well-known that a parameterized problem $L$ is FPT if and only if it is decidable and admits a kernelization. Due to applications, low degree polynomial size kernels are of main interest. Unfortunately, many FPT problems do not have kernels of polynomial size unless the polynomial hierarchy collapses to the third level \cite{CFLMPPS2015,DF2013}. For further background and terminology on parameterized complexity we refer the reader to the monographs \cite{CFLMPPS2015,DF2013,JM2006,R2006}.

\section{Bounding Number of Vertices in Kernel}\label{sec:vk}

In this section, we show that $(c,k)$-{\sc Load Coloring} admits a kernel with less than $2ck$ vertices.
A matching with $2ck$ vertices 
suggests that this bound is likely to
be optimal.

For $\tau \in \{<,\le,=, >,\ge\}$ and integer $i\ge 1$, $K_{1,\tau i}$ denotes a star $K_{1,j}$ with $j~\tau~i$ and $j\ge1$. For example, $K_{1,\le p}$ is a star with $q$ edges such that $q\in [p]$. A \textit{$K_{1,\tau i}$-graph} is a forest in which every component is a star $K_{1,\tau i}$, and a \textit{$K_{1,\tau i}$-cover} of $G$ is a $K_{1,\tau i}$-subgraph $F$ of $G$ such that $V(F)=V(G)$. We call any $K_{1,\tau i}$-graph a \textit{star graph} and any $K_{1,\tau i}$-cover a \textit{star cover}.

We first prove the bound for star graphs with small maximum degree.

\begin{lemma}\label{lem:starkernel} If $G$ is a $K_{1,<2k}$-graph with $n\geq 2ck$, then $G \in (c,k)$-LCP.
\end{lemma}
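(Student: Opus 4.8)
The plan is to reduce the statement to a purely combinatorial partition question about the star components. First I would observe that if a star $K_{1,j}$ has its center colored $i$, then an edge of the star is colored $i$ precisely when the corresponding leaf is also colored $i$, so leaves colored differently from the center lie on no monochromatic edge. Hence every star contributes its colored edges to a single color, namely that of its center, and coloring a whole star with one color $i$ yields exactly $j$ edges of color $i$. It therefore suffices to partition the star components of $G$ into $c$ groups $P_1,\dots,P_c$ so that the total number of edges of the stars in each $P_i$ is at least $k$: coloring every vertex of the stars in $P_i$ with color $i$ then certifies $G\in(c,k)$-LCP, and since the stars partition $V(G)$ this is a genuine $c$-coloring.

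Call the number of edges of a star its \emph{size}. Each star has size between $1$ and $2k-1$ and, crucially, at least $2$ vertices, so any collection of stars has total size at least half its total number of vertices; in particular, since $n\geq 2ck$, the graph has at least $ck$ edges. The natural idea is a greedy procedure that fills the groups $P_1,\dots,P_c$ one at a time, adding whole stars to the current group until its total size reaches $k$. To guarantee that all $c$ groups can be filled, I would process the remaining stars in order of \emph{decreasing} size within each group.

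The hard part, and the reason the ordering matters, is to control how many vertices a completed group consumes: a careless order (many tiny stars followed by one large star) can waste close to $4k$ vertices on a single group, which would be fatal. I would prove that with the largest-first rule each of the first $c-1$ groups reaches size at least $k$ using at most $2k$ vertices. If the largest available star already has size $\geq k$, it is used alone, costing at most $2k$ vertices. Otherwise all remaining stars have size $<k$; if the last star added has size $b$, then the earlier stars of the group each have size $\geq b$ and total size at most $k-1$, so the group contains at most $1+(k-1)/b$ stars and has total size at most $(k-1)+b$. Its vertex count, equal to size plus number of stars, is thus at most $k+b+(k-1)/b\le 2k$, where the inequality $b+(k-1)/b\le k$ for $1\le b\le k-1$ follows by convexity (the maximum is attained at the endpoints $b=1$ and $b=k-1$).

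Finally I would assemble these bounds. After the first $c-1$ groups are completed they consume at most $2k(c-1)$ vertices, so at least $2ck-2k(c-1)=2k$ vertices remain, all lying in still-unassigned stars; placing every remaining star into $P_c$ gives a group whose total size is at least half its vertex count, hence at least $k$. That each of the first $c-1$ groups can in fact be completed follows from the same half-the-vertices estimate, which shows that before group $j+1\le c-1$ is filled at least $k(c-j)\ge k$ edges still remain. This produces the required partition and completes the proof.
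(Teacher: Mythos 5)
Your proof is correct and follows essentially the same strategy as the paper's: color whole star components, grouping them so that each of the first $c-1$ colors receives at least $k$ edges while consuming at most $2k$ vertices, leaving at least $2k$ vertices (hence at least $k$ edges, by the no-isolated-vertices observation) for the last color. The only difference is cosmetic: the paper selects each group as a subset of components whose total edge count $k+b$ has minimal overshoot $b$, bounding its vertices by $(k+b)\frac{b+2}{b+1}\le 2k$, whereas you obtain the same $2k$-vertex bound via a largest-first greedy and the estimate $k+b+(k-1)/b\le 2k$.
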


\begin{proof} The idea is to find for each color some induced subgraph with at least $k$ edges and at most $2k$ vertices. If such subgraphs exist, it is possible to color at most $2ck$ vertices of the graph to obtain $k$ edges for each of the $c$ colors. We prove the lemma by induction on $c$. The base case of $c=1$ holds since a $K_{1,<2k}$-graph $G$ with at least $2k$ vertices has at least $k$ edges 
(observe that a $K_{1,<2k}$-graph has no isolated vertices).

Observe now that because all components of $G$ are trees, for each one the number of vertices is one more than the number of edges.
If there is a component $C,$ with $k \leq |E(C)| < 2k$, color $V(C)$ with the same color. Then we have used $|V(C)|\leq 2k$ vertices. Thus, we may assume that   
every component has less than $k$ edges and let $C_{1}, C_{2},\ldots , C_{t}$ be the components of $G$. Let $b$ be the minimum nonnegative integer for which there exists $I\subseteq [t]$ such that $\Sigma_{i\in I}|E(C_{i})|=k+b\geq k$. Since there is no isolated vertex in a star graph, $m \ge n/2 \ge ck$, and thus such a set $I$ exists. Observe that for any $i \in I$, $|E(C_{i})|> b$, as otherwise $\Sigma_{j\in I\backslash\{i\}}|E(C_{j})|=k+b-|E(C_{i})|\geq k$, a contradiction to the minimality of $b$. Since every component has less than $k$ edges, $b\leq k-2$.

For a star $(V,E)$, the ratio $\frac{|V|}{|E|}$ decreases when $|E|$ increases. Thus, we have $\Sigma_{j\in I}|V(C_{j})|\leq (k+b)\frac{b+2}{b+1}$. But $2k-(k+b)\frac{b+2}{b+1}=\frac{(k-2-b)b}{b+1}\geq 0$, and so $\Sigma_{j\in I}|V(C_{j})|\leq 2k$. We may color the components $C_{i}$, $i\in I$, by the same color. Observe that $H=G-V(\bigcup_{i\in I} C_{i})$ has at least $2(c-1)k$ vertices and so $H\in (c-1,k)$-LCP by the induction hypothesis. 
Thus, $G\in (c,k)$-LCP.
\end{proof}

Since $G \in (c,k)$-LCP  whenever $G$ has a subgraph $H \in (c,k)$-LCP, we have that any graph with $n \geq 2ck$ and a $K_{1,<2k}$-cover is in $(c,k)$-LCP.\\

We introduce now a family $(O_{i,k})_{i,k \in \mathbb{N}}$ of obstacles.

\begin{definition} We call a pair $(V_1,V_2)$ of disjoint vertex sets an {\em obstacle from $O_{i,k}$} if $|V_1|=i$, $ N(v) \subseteq V_1$ for all $v \in V_2$, and
for every $u \in V_1$ there is a set $V_u \subseteq N_{V_2}(u)$ such that $|V_u| \geq k$ and for every pair $u,v$ of distinct vertices of $V_1$, $V_u \cap V_v = \emptyset$.
\end{definition}

Note that if $v$ is an isolated vertex, the pair $(\emptyset, \{v\})$ is an obstacle from $O_{0,k}$.

Observe that if an obstacle ($V_1$, $V_2$) from $O_{i,k}$ is contained in a graph $G$, then $G[V_{1}\cup V_{2}] \in (i,k)$-LCP: color each $u \in V_1$ and $V_u$ with one color. However,  $G[V_{1}\cup V_{2}]\notin (i+1,k)$-LCP. Indeed, every edge in $G[V_{1}\cup V_{2}]$ is incident to at least one of the $i$ vertices in $V_1$. Thus, an edge can only be colored with one of $|V_1| = i$ colors. From this observation, we deduce the following set of reduction rules.

\textbf{Reduction~rule~$\mathbf{R_{i,k}}$.} If an instance $G$ for $(c,k)$-LCP contains an obstacle $(V_1,V_2)$ from $O_{i,k}$, delete all the vertices of $V_1 \cup V_2$ and decrease $c$ by $i$.

\medskip

Now we will prove that Rules $R_{i,k}$ are safe and can be applied in time polynomial in $n$ (recall that $c$ is fixed).

\begin{lemma}\label{lem2}
Let $G$ be a graph and $G'$ be the graph obtained from $G$ after applying reduction rule $R_{i,k}$. Then $G\in (c,k)$-LCP if and only if $G'\in (c-i,k)$-LCP. 
\end{lemma}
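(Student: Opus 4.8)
The plan is to prove the two directions of the biconditional separately, exploiting the structural observation made just before the reduction rule. Recall that $G'$ is obtained by deleting $V_1 \cup V_2$ and setting the color budget to $c-i$, where $(V_1,V_2)$ is an obstacle from $O_{i,k}$ with $|V_1|=i$.

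For the forward direction, suppose $G \in (c,k)$-LCP via a coloring $\varphi$. The key point I would establish is that at least $i$ of the $c$ colors must be ``used up'' inside $V_1 \cup V_2$ in the following sense: since every edge of $G[V_1 \cup V_2]$ is incident to a vertex of $V_1$, and $|V_1| = i$, at most $i$ distinct colors can appear as monochromatic edges inside $G[V_1 \cup V_2]$. I would argue that the colors witnessed by monochromatic edges lying \emph{entirely} within $V_1 \cup V_2$ form a set $S$ with $|S| \le i$; more importantly, the remaining colors in $[c] \setminus S$ must each already achieve their $k$ monochromatic edges using edges that have at least one endpoint outside $V_1 \cup V_2$. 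Since $N(v) \subseteq V_1$ for every $v \in V_2$, any monochromatic edge with an endpoint in $V_2$ lies inside $V_1 \cup V_2$, so such ``outside'' edges can only touch $V_1 \cup V_2$ through vertices of $V_1$. Restricting $\varphi$ to $V(G')=V(G)\setminus(V_1\cup V_2)$ and taking the $c-i$ colors in $[c]\setminus S$ (padding $S$ arbitrarily up to size exactly $i$ if fewer colors are used internally) should yield a valid $(c-i,k)$-coloring of $G'$. The delicate bookkeeping here is ensuring that exactly $i$ colors can be discarded, which is why I want to show $|S|$ can be taken to be exactly $i$ by adding dummy colors if necessary.

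For the reverse direction, suppose $G' \in (c-i,k)$-LCP via a coloring $\varphi'$. This is the easier direction and uses the constructive half of the observation preceding the rule. I would extend $\varphi'$ to $G$ by re-introducing $i$ fresh colors on the obstacle: by the defining property of $O_{i,k}$, for each $u \in V_1$ there is a private set $V_u \subseteq N_{V_2}(u)$ with $|V_u|\ge k$, and the sets $V_u$ are pairwise disjoint. Assigning a distinct new color to each pair $\{u\}\cup V_u$ gives $k$ monochromatic edges (the star edges from $u$ to $V_u$) for each of the $i$ new colors, without disturbing the $c-i$ colors already satisfied in $G'$. Combining the two colorings witnesses $G \in (c,k)$-LCP.

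I expect the main obstacle to be the forward direction, specifically the counting argument that shows we may delete exactly $i$ colors. The subtlety is that a color might be satisfied partly inside and partly outside the obstacle, or might use vertices of $V_1$ that also carry edges into $V(G')$. I would handle this by classifying each of the $c$ colors according to whether its monochromatic edges survive the deletion of $V_1\cup V_2$, and arguing via $|V_1|=i$ that at most $i$ colors fail to survive; the padding-to-exactly-$i$ step then guarantees the budget decreases by precisely $i$ rather than by at most $i$. Making this classification precise, while correctly accounting for edges incident to $V_1$ but not to $V_2$, is where the care is needed.
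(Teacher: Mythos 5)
Your reverse direction is exactly the paper's: give each pair $\{u\}\cup V_u$, $u\in V_1$, a fresh color to obtain $i$ good colors on the obstacle, disjoint from a good $(c-i)$-coloring of $G'$. The trouble is in the forward direction as you actually develop it. Classifying colors by the set $S$ of colors having a monochromatic edge \emph{entirely inside} $G[V_1\cup V_2]$ does not work: a color $j\notin S$ may have all of its $\ge k$ monochromatic edges running between a vertex of $V_1$ colored $j$ and vertices of $V(G')$ colored $j$; none of these edges lies inside $G[V_1\cup V_2]$, yet all of them are destroyed when $V_1\cup V_2$ is deleted, so $j$ does not survive in $G'$, and your "padding $S$ arbitrarily up to size $i$" step may then keep precisely this dead color while discarding a live one. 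Concretely: let $c=2$, $i=1$, $V_1=\{u\}$, $V_2=V_u$ with $|V_2|=k$, let $u$ additionally have $k$ neighbors forming an independent set $W\subseteq V(G')$, and let $G'$ contain a matching $M$ of $k$ edges. Coloring $\{u\}\cup W$ with color $1$ and everything else with color $2$ is a good $2$-coloring of $G$ with $S=\emptyset$, but color $1$ has zero edges in $G'$.

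The correct classification --- the one the paper uses, and which you only gesture at in your closing paragraph --- is by whether a color appears on a vertex of $V_1$ at all. Since $N(v)\subseteq V_1$ for every $v\in V_2$, every edge removed in forming $G'$ is incident to $V_1$; hence if a color $j$ loses any monochromatic edge, some vertex of $V_1$ is colored $j$, and since $|V_1|=i$ at most $i$ colors can lose anything. So at least $c-i$ colors keep all of their $\ge k$ monochromatic edges inside $E(G-V_1)=E(G')$, and selecting any $c-i$ of them (merging the remaining color classes into them, which cannot destroy monochromatic edges) certifies $G'\in(c-i,k)$-LCP. Note also that your worry about the budget decreasing by "exactly $i$ rather than at most $i$" is vacuous: $(c-i,k)$-LCP only demands $c-i$ good colors, so surviving colors beyond $c-i$ are harmless and no padding argument is needed. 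With the classification corrected as above, your proof becomes the paper's proof.
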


\begin{proof}
For a positive integer $p$, we call a coloring of an instance $G$ of $(c,k)$-LCP a \textit{good coloring with $p$ colors} if for at least $p$ colors $j \in [c]$, there are at least $k$ edges colored with color $j$.

If $G'\in (c-i,k)$-LCP, then $G\in (c,k)$-LCP, since a good coloring of the obstacle with $i$ colors together with a good coloring of $G'$ with $c-i$ colors gives a good coloring of $G$ with $c$ colors. On the other hand, if $G\in (c,k)$-LCP, then it has a good coloring with $c$ colors. In this coloring, there are at least $c-|V_1| = c-i$ colors with no edge with endvertices in $V_1$. These colors must have their $k$ edges in $E(G-V_1) = E(G')$. Thus $G'\in (c-i,k)$-LCP.
\end{proof}

\begin{lemma}\label{poly}
One can decide whether Rule $R_{i,k}$ is applicable to $G$ in time $O(n^{i+O(1)})$. 
\end{lemma}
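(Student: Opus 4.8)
The plan is to brute-force over the choice of $V_1$ and then reduce the remaining feasibility question to a polynomial-time bipartite matching (equivalently, network flow) computation. Since $|V_1|=i$ and $i\le c$ is bounded by the fixed constant $c$, there are only $\binom{n}{i}=O(n^i)$ candidate sets $V_1$, and I would enumerate all of them. For a fixed candidate $V_1$, the set $V_2$ of an obstacle must avoid $V_1$ and consist only of vertices whose entire neighborhood lies in $V_1$; hence every vertex of $V_2$ belongs to $W:=\{v\in V(G)\setminus V_1:\ N(v)\subseteq V_1\}$, which can be computed in time polynomial in $n$ by scanning adjacency lists.

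Given $V_1$ and $W$, the remaining task is to decide whether one can choose pairwise disjoint sets $V_u\subseteq N(u)\cap W$ with $|V_u|\ge k$ for every $u\in V_1$; if so, taking $V_2=\bigcup_{u\in V_1}V_u$ yields an obstacle, and conversely any obstacle supplies such an assignment. I would model this as a flow problem on the bipartite graph $B$ with parts $V_1$ and $W$ and an edge $uw$ whenever $uw\in E(G)$: add a source $s$ joined to each $u\in V_1$ by an arc of capacity $k$, orient each edge of $B$ from $V_1$ to $W$ with capacity $1$, and join each $w\in W$ to a sink $t$ by an arc of capacity $1$. Then the desired disjoint sets exist if and only if the maximum $s$--$t$ flow has value $ik$, i.e. all source arcs are saturated, since an integral flow of this value selects for each $u$ exactly $k$ distinct vertices of $W\cap N(u)$, and the unit capacities into $t$ enforce disjointness.

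Computing a maximum flow in this network takes time polynomial in $n$, so the whole check for a single $V_1$ is polynomial, and Rule $R_{i,k}$ is applicable precisely when at least one of the $O(n^i)$ candidate sets passes the test. Multiplying the number of candidates by the polynomial cost per candidate gives the claimed bound $O(n^{i+O(1)})$. The degenerate case $i=0$ is immediate: $V_1=\emptyset$, there are no sets $V_u$ to produce, and the rule applies exactly when $G$ has an isolated vertex, which is checkable in linear time.

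I expect the only genuinely substantive point to be the reduction to flow and, in particular, verifying its correctness in both directions: that an integral maximum flow of value $ik$ corresponds exactly to a valid family of disjoint private neighborhoods, and hence to an obstacle, and conversely. Everything else (enumerating $V_1$, computing $W$, and the max-flow running time) is routine. An equivalent and shorter route for the feasibility check is Hall's theorem in its deficiency form: the assignment exists if and only if $|N_B(S)\cap W|\ge k|S|$ for every $S\subseteq V_1$, which needs only $2^i=O(1)$ set checks; I would mention this as an alternative but rely on the flow formulation for the clean polynomial bound.
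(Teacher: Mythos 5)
Your proposal is correct and follows essentially the same route as the paper: enumerate all $\binom{n}{i}=O(n^i)$ candidate sets $V_1$, compute the set of vertices whose neighborhoods lie entirely in $V_1$, and decide the existence of the disjoint sets $V_u$ by a polynomial-time bipartite matching computation. The paper phrases this last step as a matching covering a vertex set obtained by duplicating each vertex of $V_1$, while you use capacity-$k$ source arcs in a flow network, but these are the standard equivalent formulations of the same assignment problem.
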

\begin{proof}
Generate all $i$-size subsets $V_1$ of $V(G)$. For each $V_1$, construct the set $V_2$ that includes every vertex outside $V_1$ whose only neighbors are in $V_1$. If $|V_2| \geq ik$, construct the following bipartite graph $B$: the partite sets of $B$ are $V'_1$ and $V_2$, where $V'_1$ contains $i$ copies of every vertex $v$ of $V_1$ with the same neighbors as $v$. Observe that $B$ has a matching covering $V'_1$ if and only if $R_{i,k}$ can be applied to $G$ for the obstacle $(V_1,V_2)$. It is not hard to turn the above into an algorithm of runtime $O(n^{i+O(1)})$.
\end{proof}

We say that a graph is \textit{reduced for $(c, k)$-LCP} if it is not possible to apply any rule $R_{i,k}$, $i < c$ to the graph. 

\begin{lemma}\label{lem:coverExists}
Let $G$ be a reduced graph for $(c,k)$-LCP and let $G\not\in (c,k)$-LCP.
Then $G$ has a $K_{1,\leq \max\{3,k\}}$-cover.
\end{lemma}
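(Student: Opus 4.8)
The plan is to fix $t=\max\{3,k\}$ and argue by contradiction: assuming $G$ is reduced, $G\notin(c,k)$-LCP, yet $G$ has no $K_{1,\le t}$-cover, I will produce an obstacle that either triggers a reduction rule (contradicting that $G$ is reduced) or yields a good $c$-coloring (contradicting $G\notin(c,k)$-LCP). First I would record two consequences of $G$ being reduced. Since an isolated vertex $v$ gives an obstacle $(\emptyset,\{v\})$ from $O_{0,k}$ and $0<c$, rule $R_{0,k}$ would apply; hence $G$ has no isolated vertices. Similarly, if some vertex $v$ had at least $k$ leaf-neighbors, then $(\{v\},\,\text{leaf-neighbors of }v)$ would be an obstacle from $O_{1,k}$ with $1<c$, so every vertex has fewer than $k$ leaf-neighbors. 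Because $G$ has no isolated vertex, it admits at least one star cover; among all star covers I would pick one, call it $F$, minimising the potential $\Phi(F)=\sum_{S\in F}|E(S)|^2$ (a strictly convex weight on star sizes). If every star of $F$ has at most $t$ edges we are done, so assume some star is \emph{large}, i.e.\ has more than $t$ edges.

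The heart of the argument is to show that the large stars, together with their leaves, form an obstacle. Let $V_1$ be the set of centres of large stars and $V_2$ the set of their leaves. Each large centre has more than $t\ge k$ leaves, and these leaf sets are pairwise disjoint (distinct stars of a cover are vertex-disjoint), so each $w\in V_1$ comes equipped with a private set $V_w\subseteq N_{V_2}(w)$ of size at least $k$. What remains is the containment $N(u)\subseteq V_1$ for every leaf $u\in V_2$. A leaf-neighbor $u$ satisfies $N(u)=\{v\}\subseteq V_1$ trivially. For a leaf $u$ of degree at least $2$ I would use $\Phi$-minimality through a sequence of local exchanges: (a) two leaves of a large star cannot be adjacent in $G$, and a leaf of a large star cannot be adjacent to a leaf of another star (otherwise splitting off the edge $uw$ strictly decreases $\Phi$), so every neighbor of $u$ is a \emph{centre}; and (b) moving $u$ to an adjacent centre whose star is at least two smaller strictly decreases $\Phi$, so every centre adjacent to $u$ heads a star whose size is within one of $u$'s. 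Combining these should force $N(u)$ into the set of centres of large stars, giving $(V_1,V_2)\in O_{|V_1|,k}$.

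Once the obstacle is in hand the conclusion is a clean dichotomy, and this is where both hypotheses are spent. If $|V_1|<c$, then rule $R_{|V_1|,k}$ is applicable to $G$, contradicting that $G$ is reduced. If instead $|V_1|\ge c$, then we have at least $c$ pairwise vertex-disjoint large stars, each carrying more than $t\ge k$ edges; colouring $c$ of them with $c$ distinct colours witnesses $G\in(c,k)$-LCP, contradicting the hypothesis. Either way we reach a contradiction, so $F$ has no large star and is the desired $K_{1,\le t}$-cover.

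The step I expect to be the genuine obstacle is closing part (b): the exchange that moves a single leaf from a size-$j$ star to an adjacent centre heading a size-$(j-1)$ star merely transposes the two star sizes and therefore leaves $\Phi$ unchanged, so strict convexity alone does not rule out a large-star leaf being adjacent to the centre of a star just below the large threshold. This boundary tie is exactly the reason the bound is the slightly loose $\max\{3,k\}$ rather than $k$, and handling it---either by a more refined extremal choice (breaking ties within the family of minimum-$\Phi$ covers) or by a closure argument that absorbs the borderline centres into $V_1$ while keeping each of them equipped with $k$ private leaves---is the delicate part of the proof. The small-$k$ regime, where the ``$3$'' rather than $k$ is the operative bound and several of the splitting moves require a star to have at least three edges to remain a star after an exchange, would be checked separately.
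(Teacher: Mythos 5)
Your proposal follows the same overall strategy as the paper: take an extremal star cover (your convex potential $\Phi(F)=\sum_{S}|E(S)|^2$ plays the role of the paper's lexicographically minimal star-size sequence, and the two behave identically on all the relevant exchanges), rule out leaf--leaf edges by a splitting exchange (the paper's Claim 1), assemble large stars into an obstacle, and finish with the dichotomy $|V_1|<c$ (a reduction rule applies, contradicting reducedness) versus $|V_1|\ge c$ (vertex-disjoint stars with at least $k$ edges each give a good coloring). However, the proposal has a genuine gap, and it is exactly the one you flag yourself: a single-leaf exchange cannot exclude an edge from a leaf of a size-$p$ large star to the center of a size-$(p-1)$ star, because that move merely transposes the two star sizes. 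No tie-breaking among minimum-$\Phi$ covers can repair this on its own, since the problematic exchange produces a cover with the identical multiset of star sizes, so any secondary criterion that depends only on that multiset --- including $\Phi$ and the paper's lexicographic sequence --- is invariant under it. Naming ``a closure argument that absorbs the borderline centres'' identifies the right idea but is not yet a proof; the mechanism that makes the closure terminate in an obstacle is absent from your write-up.

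That missing mechanism is the paper's Claim 2: if $S$ contains a star of size $i$ centered at $x$ and a star of size $j$ centered at $y$ with $i-j\ge 2$, then there is no path from $x$ to $y$ alternating between $S$-edges (center to leaf) and non-$S$-edges (leaf to center). Its proof shifts one leaf along the \emph{entire} path: $x$ loses a leaf, $y$ gains one, every intermediate center keeps its degree, so the cover strictly decreases (in your language, $\Phi$ changes by $2(j-i+1)<0$) even though each individual step along the path is a tie. With this global, multi-step exchange in hand, the paper builds $S'$ by starting from the maximum-size stars $K_{1,\Delta(S)}$ and repeatedly absorbing any center of a $K_{1,\Delta(S)-1}$ star adjacent to a leaf of $S'$. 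Claim 2 then guarantees that when this process stops, no leaf of $S'$ has a neighbor outside the center set $C'$ (such a neighbor would be a center whose star size is at least $2$ below $\Delta(S)$, reached from a $C_{\Delta(S)}$ vertex by an alternating path), while every absorbed center still owns $\Delta(S)-1\ge\max\{3,k\}\ge k$ private leaves, so $(C',L')$ is a legitimate obstacle. This chain-shift-plus-closure step is precisely the ``delicate part'' you deferred, and it is the heart of the lemma; without it the argument does not go through.
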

\begin{proof} Let $G$ be such a reduced graph. We first show that $G$ has a star cover. Since it is not possible to apply $R_{0,k}$, $G$ has no isolated vertex. By choosing a spanning tree of each component of $G$, we obtain a forest $F$. If a tree in $F$ is not a star, it has an edge not incident to a leaf. As long as $F$ contains such an edge, delete it from $F$. Observe that $F$ becomes a star cover of $G$. However, the number of leaves in each star of $F$ is only bounded by $\Delta(G)$. We will show that among the possible star covers of $G$, there exists a $K_{1,\leq \max(3,k)}$-cover.

For each star cover $F$, we define the $F$-{\em sequence} ($n_{F,\Delta(G)}$,$n_{F,\Delta(G)-1}$,$\ldots$, $n_{F,1}$), where $n_{F,i}$ is the number of stars with exactly $i$ edges, $i \in [\Delta(G)]$. We say a star cover $F_{1}$ is {\em smaller} than a star cover $F_{2}$ if and only if the $F_{1}$-{\em sequence} is smaller than the $F_{2}$-{\em sequence} lexicographically, i.e. there exists some $i \in [\Delta(G)]$ such that $n_{F_{1}, i} < n_{F_{2}, i}$ and for every $j > i$, $n_{F_{1}, j} = n_{F_{2}, j}$. We select a star cover $S$ which has the lexicographically minimum sequence, that is, for any star cover $F \neq S$ of $G$, the $S$-sequence is smaller or equal to the $F$-sequence. 
Suppose that $\Delta(S) > \max\{3,k\}$. Let $C_i$ ($L_i$, respectively) be the set of all the centers (leaves, respectively) of all stars of $S$ isomorphic to $K_{1, i}$. We also define $L_{\geq i}=\cup_{j\geq i}L_j$. We will now prove two claims.

\paragraph{Claim 1} {\em There is no edge $uv \in E(G)\setminus E(S)$ such that $u \in L_{\geq 3}$ and $v \in L_{\geq 1}$.} 

Indeed, suppose there exists one and let 
$x,y$ be such that $xu \in E(S)$, $yv \in E(S)$.
If $v \in L_{\geq 2}$, then by deleting edges $xu, yv$ and adding edge $uv$, we do not create any isolated vertex but we decrease the size of the stars centered at $x$ and $y$, and thus we get a smaller star cover than $S$, a contradiction. Otherwise, $v$ is an endvertex of an independent edge, and by deleting edge $xu$ and adding edge $uv$, we decrease the size of the star centered at $x$, and create a star $K_{1,2}$ centered at $v$, which still induces a star cover smaller than $S$, a contradiction. 

\paragraph{Claim 2} {\em Suppose $S$ contains a star isomorphic to $K_{1, i}$ and centered at vertex $x$, and a star isomorphic to $K_{1, j}$ and centered at vertex $y$, such that $i-j\ge 2$. There is no path from $x$ to $y$ in which the odd edges are in $E(S)$ and go from a center to a leaf, and the even edges are in $E(G)\setminus E(S)$ and go from a leaf to a center.}

Suppose there exists such a path. Then by deleting the odd edges of the path and adding the even ones, we do not create isolated vertices because $x$ still has leaf-neighbors, $y$ gets a neighbor, every transitional center keeps the same number of leaf-neighbors and the transitional leaves always go to a new center. This operation only decreases the size of star centered at $x$ by 1 and increases the size of star centered at $y$ by 1, giving us a lexicographically smaller star cover, a contradiction.\\

Now, let $S'$ be the subgraph of $S$ containing all stars $K_{1,\Delta(S)}$ of $S$. While there is an edge $uv \in E(G)\setminus E(S)$ such that $u$ is a leaf of $S'$ and $v \in C_{\Delta(S)-1}\setminus S'$, we add the star centered at $v$ to $S'$. This procedure terminates because  $C_{\Delta(S)-1}$ is finite.

Let $C'$ ($L'$, respectively) be the centers (leaves, respectively) in $S'$. Assume now there is an edge $uv \in E(G)\setminus E(S)$ such that $u \in L' \subseteq L_{\ge \Delta(S)-1} \subseteq L_{\ge 3}$ and $v \in V(G)\setminus C'$. By Claim 1, $v \not\in L_{\ge 1}$. Since $v \not\in C_{\Delta(S)} \subseteq C'$ and since the above procedure has terminated, $v \in C_j$ for some $j$ such that $\Delta(S) -j \ge 2$. Now, by construction, there is a alternating path from a vertex in $C_{\Delta(S)}$ to a vertex in $C_j$ of the type described in Claim 2, which is impossible. 

So, there is no edge $uv \in E(G)\setminus E(S)$ such that $u \in L'$ and $v \not\in C'$. This means that for any $u \in L', N(u) \subseteq C'$. Furthermore, for each $u \in C'$, we can
define $V_u$ to be the leaves of the star centered at $u$, for which we have $|V_u| \ge  \Delta(S)-1 \geq k$.
So, $(C',L')$ is an obstacle from $O_{|C'|,k}$. Since $G$ is reduced for $(c,k)$-LCP, $|C'| \ge c$ and thus $G[S'] \in (|C'|,k)$-LCP. This implies that $G \in (c,k)$-LCP, a contradiction. 
\end{proof}

Now we can prove the following:

\begin{theorem}\label{th:1}
For every fixed $c$, if $G$ is reduced for $(c,k)$-LCP and has at least $2ck$ vertices, then $G \in (c,k)$-LCP. 
Thus, $(c,k)$-{\sc Load Coloring} admits a kernel with less than $2ck$ vertices. 
\end{theorem}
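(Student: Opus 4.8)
The plan is to prove the first assertion by contradiction for $k\ge 2$, obtain the kernel as a routine consequence, and dispose of $k\le 1$ separately. Assume $G$ is reduced for $(c,k)$-LCP, satisfies $n\ge 2ck$, and yet $G\notin(c,k)$-LCP. Then the hypotheses of Lemma~\ref{lem:coverExists} are met, so $G$ has a $K_{1,\le\max\{3,k\}}$-cover. The key point is that for every $k\ge 2$ we have $\max\{3,k\}<2k$: when $k=2$ this reads $3<4$, and when $k\ge 3$ it reads $k<2k$. Hence the cover is in fact a $K_{1,<2k}$-cover, and the remark following Lemma~\ref{lem:starkernel}---any graph on at least $2ck$ vertices with a $K_{1,<2k}$-cover belongs to $(c,k)$-LCP---yields $G\in(c,k)$-LCP, contradicting our assumption. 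This proves the implication for $k\ge 2$.

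The small cases $k\le 1$ genuinely fall outside this argument, since for $k=1$ the bound becomes $\max\{3,1\}=3\not<2=2k$ and the chain through Lemma~\ref{lem:starkernel} breaks. I would therefore treat them directly, which is harmless because a bounded parameter makes the problem polynomial-time solvable: $(c,0)$-LCP holds for every graph, and $G\in(c,1)$-LCP if and only if $G$ has a matching of size $c$ (the $c$ required monochromatic edges must be pairwise vertex-disjoint, as distinct colors cannot meet at a common vertex), which is decidable in polynomial time. These base cases are folded into the kernelization as preprocessing.

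With the implication in hand, the kernelization follows. Given an instance $(G,c,k)$, I would exhaustively apply the reduction rules $R_{i,k}$ with $i<c$: by Lemma~\ref{poly} each application is detectable and performable in time $n^{O(1)}$ (as $c$ is fixed), and by Lemma~\ref{lem2} each preserves the answer while deleting $V_1\cup V_2$ and lowering the color budget from $c$ to $c-i\ge 1$. Since every application deletes at least one vertex, the process stops after at most $n$ rounds at a reduced instance $(G',c')$ with $1\le c'\le c$. Now apply the implication to $G'$: if $|V(G')|\ge 2c'k$ then $G'\in(c',k)$-LCP, so the original instance is a yes-instance and we may return a trivial equivalent one; otherwise $|V(G')|<2c'k\le 2ck$, and $(G',c')$ is the desired kernel with fewer than $2ck$ vertices.

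The difficult structural work has already been carried out in Lemma~\ref{lem:coverExists} (extracting a bounded-degree star cover from a reduced no-instance) and Lemma~\ref{lem:starkernel} (coloring such a cover), so the theorem itself is largely an assembly of these pieces. Accordingly, I expect the only genuinely delicate points to be the numerical reconciliation of $\max\{3,k\}$ with $2k$ and the resulting need to quarantine $k\le 1$, together with checking that the reduction process keeps $c'\ge 1$ and terminates. None of these should present a serious obstacle.
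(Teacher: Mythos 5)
Your proposal is correct and takes essentially the same route as the paper's own proof: dispose of $k\le 1$ via the matching characterization, reduce exhaustively using Lemmas~\ref{lem2} and~\ref{poly}, and for $k\ge 2$ derive a contradiction by combining the $K_{1,\le\max\{3,k\}}$-cover from Lemma~\ref{lem:coverExists} (which is a $K_{1,<2k}$-cover since $k\ge 2$) with Lemma~\ref{lem:starkernel}. The only difference is that you spell out the termination of the reduction process and the final case analysis on $|V(G')|$ in more detail than the paper does, which is harmless.
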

\begin{proof}
Observe that for every $c$, $G\in (c,0)$-LCP, and $G\in (c,1)$-LCP if and only if $G$ has a matching with at least $c$ edges. Thus, we may assume that $k\ge 2$. By Lemmas \ref{lem2} and \ref{poly}, we can map, in polynomial time, any instance $(G,c)$ into an instance $(G',c')$ such that $c' \le c$ and $G'$ is reduced for $(c',k)$-LCP. We therefore may assume that $G$ is reduced for $(c,k)$-LCP. Suppose that $G \not \in (c,k)$-LCP and $n \ge 2ck$. By Lemma \ref{lem:coverExists}, $G$ has a $K_{1,\leq \max(3,k)}$-cover which is a $K_{1,< 2k}$-cover, since we assumed $k \ge 2$. But then, Lemma \ref{lem:starkernel} implies that $G \in (c,k)$-LCP, a contradiction. 
\end{proof}

\section{Bounding Number of Edges in Kernel}\label{sec:ek}

In the previous section, we proved that $(c,k)$-{\sc Load Coloring} admits a kernel with less than $2ck$ vertices. We would like to bound the number of edges in a kernel for the problem.

\begin{lemma}\label{lem:bi}
Let $b(c,k,n) = c^2k + n(c-1)$. For every integer $i\ge 0$ and bipartite graph $G$ with $n$ vertices, if $m \ge b(2^i,k,n)$ then $G \in (2^i,k)$-LCP.
\end{lemma}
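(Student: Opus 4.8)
The plan is to induct on $i$. For the base case $i=0$ we have $c=2^{0}=1$ and $b(1,k,n)=k$, so the claim reduces to the observation from the introduction that $G\in(1,k)$-LCP if and only if $m\ge k$, which holds trivially. For the inductive step I assume the statement for $i-1$ and prove it for $i$. Write $c=2^{i}$ and $c'=2^{i-1}=c/2$, and fix a bipartition $(A,B)$ of the bipartite graph $G$.

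The strategy is to split the $c$ colors into two palettes of $c'$ colors each and realize them on two vertex-disjoint bipartite subgraphs. Concretely, I partition $A=A_1\sqcup A_2$ and $B=B_1\sqcup B_2$, and set $X=A_1\cup B_1$ and $Y=A_2\cup B_2$. Then $G[X]$ and $G[Y]$ are vertex-disjoint bipartite graphs with edge sets $E(A_1,B_1)$ and $E(A_2,B_2)$. The edges running between $X$ and $Y$, namely $E(A_1,B_2)\cup E(A_2,B_1)$, are harmless: in any coloring using palette $[c']$ on $X$ and palette $\{c'+1,\dots,c\}$ on $Y$, no cross edge is monochromatic, so the edges witnessing a color $j\in[c']$ all lie inside $G[X]$ and those witnessing a color in $\{c'+1,\dots,c\}$ all lie inside $G[Y]$. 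Hence, if I can find a split with $G[X]\in(c',k)$-LCP and $G[Y]\in(c',k)$-LCP, then $G\in(2c',k)=(c,k)$-LCP. By the induction hypothesis it therefore suffices to produce a split with $m_X\ge b(c',k,n_X)$ and $m_Y\ge b(c',k,n_Y)$, where $m_X=|E(A_1,B_1)|$, $m_Y=|E(A_2,B_2)|$, and $n_X,n_Y$ are the vertex counts of $X,Y$.

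To control the total I use averaging. For any partition of $A$ and $B$, the two ``diagonal'' groupings $E(A_1,B_1)+E(A_2,B_2)$ and $E(A_1,B_2)+E(A_2,B_1)$ sum to $m$, so the better of them keeps at least $m/2$ edges; choosing that grouping gives $m_X+m_Y\ge m/2$. Setting $w_X=m_X-(c'-1)n_X$ and $w_Y=m_Y-(c'-1)n_Y$, the two targets read $w_X\ge c'^{2}k$ and $w_Y\ge c'^{2}k$. Using $m\ge b(c,k,n)=4c'^{2}k+(2c'-1)n$ and $n_X+n_Y=n$, a short calculation gives $w_X+w_Y=(m_X+m_Y)-(c'-1)n\ge m/2-(c'-1)n\ge 2c'^{2}k+\tfrac{n}{2}$, so the combined surplus over the two targets is at least $n/2$.

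The main obstacle is balancing: the averaging bound only controls the sum $w_X+w_Y$, whereas I need each of $w_X,w_Y$ to clear the threshold $c'^{2}k$ individually. I expect to resolve this with an exchange argument: start from the best split and, while one side (say $Y$) falls short, move vertices from $X$ to $Y$ one at a time, tracking how the two quantities change. A single move of $v\in A_1$ into $A_2$ changes $w_X$ by $(c'-1)-|N(v)\cap B_1|$ and $w_Y$ by $|N(v)\cap B_2|-(c'-1)$, and transforming a split into its mirror simply swaps $(w_X,w_Y)$; a discrete intermediate-value argument along such a sequence of moves should produce a nearly balanced split, after which the surplus of $n/2$ is enough to push the lagging side above $c'^{2}k$. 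The delicate point is that a single move can alter an edge count by a full degree, so the step sizes are not uniformly small; handling this — for instance by moving vertices in a carefully chosen order, or by first reducing to a split that is already close to balanced — is where the real work lies. The restriction to $c=2^{i}$ enters only here, in that the clean halving of the palette is exactly what makes the two induction calls symmetric.
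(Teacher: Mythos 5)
Your overall architecture matches the paper's: induction on $i$ with base case $b(1,k,n)=k$, and an inductive step realized by splitting $V(G)$ into two disjoint sets $X=A_1\cup B_1$, $Y=A_2\cup B_2$ so that each induced bipartite subgraph has at least $b(2^{i},k,\cdot)$ edges, cross edges being harmless. Your surplus computation ($w_X+w_Y\ge 2c'^2k+n/2$ for the better diagonal pairing) is also correct. But the proposal has a genuine gap exactly where you admit "the real work lies": you never produce a split in which \emph{each} of $w_X,w_Y$ clears its threshold $c'^2k$, and the discrete intermediate-value argument you sketch does not work as stated. A single vertex move changes $w_X$ and $w_Y$ by amounts up to a vertex degree, which can be $\Theta(n)$ and thus dwarfs the total slack of $n/2$; worse, a move does not even preserve the sum $m_X+m_Y$ (moving $v\in A_1$ to $A_2$ deletes $|N(v)\cap B_1|$ edges from $m_X$ but adds only $|N(v)\cap B_2|$ to $m_Y$), so the combined surplus you computed for the starting split is not an invariant along the exchange path. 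Nothing in the sketch controls either effect, and this balancing step is the entire substance of the lemma.

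The paper circumvents this difficulty rather than solving it head-on. Assuming $G\notin(2^{i+1},k)$-LCP, it runs a two-stage greedy process: first it grows $B_2$ vertex by vertex while $|E(A,B_2\cup\{u\})|$ stays below a threshold, then it moves vertices of $A$ from $A_1$ to $A_2$ while $|E(A_1,B_1)|$ is above and $|E(A_2,B_1)|$ is below thresholds of the form $b(2^i,k,\cdot)+|A_1|$ (respectively $+|A_2|$). The additive slack terms are chosen precisely so that the degree-sized jump caused by the last moved vertex is absorbed (the "last vertex moved" bookkeeping in the paper's Case 1). When the process halts, either both relevant blocks meet their thresholds --- in which case a pigeonhole step splits off one more vertex $u\in B_1$, the induction hypothesis applies to both resulting induced subgraphs, and $G\in(2^{i+1},k)$-LCP, contradicting the assumption --- or some block falls short, and then summing all the termination inequalities bounds $|E(G)|<b(2^{i+1},k,n)$ outright. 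In other words, the paper proves the contrapositive and never needs a balanced split to exist "by continuity": getting stuck while balancing is itself converted into the desired edge bound. To complete your argument you would need to replace the intermediate-value idea with this kind of threshold-with-slack process together with a case analysis in which the stuck case yields the bound on $m$ rather than a split.
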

\begin{proof}
We will prove the lemma by induction on $i$. For the base case, observe that any graph with at least $k = b(1,k,n)$ edges is in $(1,k)$-LCP for every $k$ and $n$. We now assume the claim holds for any $j$ smaller than $i+1$ and want to prove it for $i+1$. Consider a bipartite graph $G = (A \cup B, E)$ with $n$ vertices such that $G\not\in (2^{i+1},k)$-LCP. Let $A_1 = A$, $B_1 = B$ and $A_2 = B_2 = \emptyset$. While there exists $u \in B_1$ such that $|E(A,B_2 \cup \{u\})| < b(2^i,k,|A|+|B_2 \cup \{u\}|) + b(2^i,k,|B_2 \cup \{u\}|)$, move $u$ from $B_1$ to $B_2$. So now assume there is no such $u$. Then, while $|E(A_1,B_1)| \ge b(2^i,k,|A_1|+|B_1|) + |A_1|$ and $|E(A_2,B_1)| < b(2^i,k,|A_2|+|B_1|) + |A_2|$, move an arbitrary vertex from $A_1$ to $A_2$. Since we only move vertices from $A_1$ to $A_2$ or from $B_1$ to $B_2$, we always have $A = A_1 \cup A_2$ and $B = B_1 \cup B_2$. Eventually, the partition of $A\cup B$ falls into one of two cases: 
\begin{itemize}
\item $|E(A_1,B_1)| < b(2^i,k,|A_1|+|B_1|) + |A_1|$. If $A_2 = \emptyset$,  then $|E(A_2,B_1)| = 0$. Otherwise, let $v$ be the last vertex moved from $A_1$ to $A_2$. Observe that $|E(A_2,B_1)| \le |E(A_2\setminus \{v\},B_1)| + |B_1| < b(2^i,k,|A_2\setminus \{v\}|+|B_1|) + |A_2\setminus \{v\}| + |B_1|$. In both cases, $|E(A_2,B_1)| < b(2^i,k,|A_2|+|B_1|)+|A_2|+|B_1|$. Thus, we have $|E(G)| = |E(A_1,B_1)| + |E(A_2,B_1)| + |E(A,B_2)| < (b(2^i,k,|A_1|+|B_1|) + |A_1|) + (b(2^i,k,|A_2|+|B_1|) + |A_2| + |B_1|) + (b(2^i,k,|A|+|B_2|) + b(2^i,k,|B_2|)) \le 4(2^{2i})k + 2n(2^i-1) + n = 2^{2(i+1)}k + n(2^{i+1}-1) = b(2^{i+1},k,n)$, as required.

\item $|E(A_1,B_1)| \ge b(2^i,k,|A_1|+|B_1|) + |A_1|$. In this case, we also have $|E(A_2,B_1)| \ge b(2^i,k,|A_2|+|B_1|) + |A_2|$. Let $u$ be an arbitrary vertex in $B_1$. Observe that $|E(A_1,B_1\setminus \{u\})| \ge b(2^i,k,|A_1|+|B_1|)$ and $|E(A_2,B_1\setminus \{u\})| \ge b(2^i,k,|A_2|+|B_1|)$. We also have $|E(A,B_2 \cup \{u\})| \ge b(2^i,k,|A|+|B_2 \cup \{u\}| ) + b(2^i,k,|B_2 \cup \{u\}|)$. It is not possible that $|E(A_1,B_2 \cup \{u\})| < b(2^i,k,|A_1|+|B_2 \cup \{u\}|)$ and $|E(A_2,B_2 \cup \{u\})| < b(2^i,k,|A_2|+|B_2 \cup \{u\}|)$ as otherwise, $|E(A,B_2 \cup \{u\})| = |E(A_1,B_2 \cup \{u\})| + |E(A_2,B_2 \cup \{u\})| < b(2^i,k,|A_1|+|B_2 \cup \{u\}|) + b(2^i,k,|A_2|+|B_2 \cup \{u\}|) = b(2^i,k,|A|+|B_2 \cup \{u\}|) + b(2^i,k,|B_2 \cup \{u\}|)$. So, there exist disjoint vertex sets $X$ and $Y$ such that $|E(X)| \ge b(2^i,k,|X|)$ and $|E(Y)| \ge b(2^i, k, |Y|)$
(either $X = A_1 \cup B_1 \setminus \{u\}$ and $Y = A_2 \cup B_2 \cup \{u\}$, or $X = A_2 \cup B_1 \setminus \{u\}$ and $Y = A_1 \cup B_2 \cup \{u\}$). 
Thus, by taking a suitable $2^i$-coloring of $X$ and a suitable $2^i$-coloring of $Y$, we have that $G \in (2^{i+1},k)$-LCP, a contradiction.

\end{itemize} 
So we have proved that the claim also holds when $j=i+1$, i.e. if $m \geq b(2^{i+1},k,n)$ then $G\in(2^{i+1},k)$-LCP.
\end{proof}

\begin{lemma}\label{lemm6}
Let $f(c,k,n) = (2c-1)ck+2n(c - 1)$. For every nonnegative integer $i$ and every graph $G$ with $n$ vertices, if $m \ge f(2^i,k,n)$ then $G \in (2^i,k)$-LCP.
\end{lemma}
\begin{proof}
We will prove the lemma by induction on $i$. For the base case, observe that any graph with at least $k = f(1,k,n)$ edges is in $(1,k)$-LCP for every $k$ and $n$. We now assume the claim holds for any $j$ smaller than $i+1$ and want to prove it for $i+1$. Consider a graph $G$ with $n$ vertices such that $G\not\in (2^{i+1},k)$-LCP and $|E(G)| \ge f(2^i,k,n)$. 

We will show that there exists a set $A \subseteq V(G)$ such that $f(2^i,k,|A|) \le |E(A)| \le f(2^i,k,|A|)+|A|$ (and thus $G[A] \in (2^i,k)$-LCP). We may construct the set $A$ as follows: initially $A = \emptyset$ and while $|E(A)| < f(2^i,k,|A|)$, add an arbitrary vertex of $V(G)\setminus A$ to $A$. Let $u$ be the last added vertex; we have $f(2^i,k,|A|) \le |E(A)| \le |E(A\setminus \{u\})| + |A\setminus \{u\}| < f(2^i,k,|A\setminus \{u\}|) + |A\setminus \{u\}| < f(2^i,k,|A|) + |A|$. 

Let $B = V(G)\setminus A$. If $G[B] \in (2^i,k)$-LCP, then $G \in (2^{i+1},k)$-LCP, a contradiction. So $|E(B)| < f(2^i,k,|B|)$. Furthermore, $|E(A,B)| < b(2^{i+1},k,n)$, as otherwise we are done by Lemma \ref{lem:bi}. Finally, $|E(G)| = |E(A)| + |E(B)| + |E(A,B)| < f(2^i,k,|A|) +  f(2^i,k,|B|) + n + b(2^{i+1},k,n) = f(2^{i+1},k,n)$. The claim holds when $j=i+1$, which completes the proof. 
\end{proof}

\begin{theorem}\label{thm:edgeKernel}
The $(c,k)$-{\sc Load Coloring} Problem admits a kernel with less than
$f(2c,k,2ck) = 16c^{2}k-6ck$ edges.
\end{theorem}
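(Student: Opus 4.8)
The plan is to assemble the edge bound from three ingredients that are already in place: the power-of-two edge bound of Lemma~\ref{lemm6}, a trivial monotonicity of $(c,k)$-LCP in the number of colors, and the vertex kernel of Theorem~\ref{th:1} (whose role is to bound $n$). The point to keep in mind throughout is that the vertex kernel alone only guarantees $n<2ck$, which still permits up to roughly $\binom{2ck}{2}=\Theta(c^2k^2)$ edges; the extra work of this theorem is to certify membership whenever the edge count is large, thereby pushing the bound down to $O(c^2k)$.

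First I would record the monotonicity fact: if $G\in(C,k)$-LCP and $c\le C$, then $G\in(c,k)$-LCP. Indeed, take a $C$-coloring witnessing membership and recolor every vertex whose color lies in $\{c,\dots,C\}$ with color $c$; the colors $1,\dots,c-1$ keep their $\ge k$ monochromatic edges and color $c$ can only gain edges. Next, for a general $c$, let $P$ be the smallest power of two with $P\ge c$, so that $c\le P<2c$. Applying Lemma~\ref{lemm6} to $P$ (which is a power of two) together with monotonicity yields: if $m\ge f(P,k,n)$ then $G\in(P,k)$-LCP, and hence $G\in(c,k)$-LCP. Since $f$ is increasing in its first argument and $P<2c$, we have $f(P,k,n)\le f(2c,k,n)$, so already $m\ge f(2c,k,n)$ forces $G\in(c,k)$-LCP; equivalently, any NO-instance on $n$ vertices satisfies $m<f(2c,k,n)$.

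Then I would combine this with the vertex kernel to describe the kernelization. The procedure first applies the rules $R_{i,k}$ (safe and polynomial by Lemmas~\ref{lem2} and~\ref{poly}, as $c$ is fixed) to reach a reduced instance $(G',c')$ with $c'\le c$. By Theorem~\ref{th:1}, if $n'\ge 2c'k$ we output a trivial yes-instance. Otherwise $n'<2c'k$, and we test whether $m'\ge f(2c',k,n')$: if so, the previous paragraph (applied with $c'$) certifies $G'\in(c',k)$-LCP and we again output a trivial yes-instance. In the only remaining case $m'<f(2c',k,n')$, so using $n'<2c'k$ and the monotonicity of $f$ in $n$ we obtain $m'<f(2c',k,2c'k)=16c'^2k-6c'k$, and we keep $G'$ as the kernel. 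Equivalence with the input holds by Lemma~\ref{lem2}, and the whole procedure is polynomial.

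Finally I would close with the arithmetic: $f(2c,k,2ck)=16c^2k-6ck$, and since $16x^2k-6xk$ is increasing for $x\ge 1$ we have $16c'^2k-6c'k\le 16c^2k-6ck$ whenever $c'\le c$. Hence in every branch the produced instance has fewer than $16c^2k-6ck$ edges. I expect no genuinely hard combinatorial step here — the substance is in Lemma~\ref{lemm6} — and the only delicate points are bookkeeping: verifying that $f$ is monotone in each argument so that the three substitutions $P\mapsto 2c$, $n'\mapsto 2c'k$, and $c'\mapsto c$ are all in the safe direction, and remembering to replace a certified yes-instance by a trivial one rather than keeping a large graph.
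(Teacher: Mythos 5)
Your proposal is correct and follows essentially the same route as the paper's own (very terse) proof: apply the vertex kernel of Theorem~\ref{th:1} to get $n<2ck$, round $c$ up to the nearest power of two $P<2c$, and invoke Lemma~\ref{lemm6} together with monotonicity of $f$ to conclude that any surviving instance has fewer than $f(2c,k,2ck)=16c^2k-6ck$ edges. The only difference is that you make explicit the bookkeeping the paper leaves implicit (monotonicity of $(c,k)$-LCP membership in the number of colors, the reduced color count $c'\le c$, and replacing certified yes-instances by trivial ones), which is a faithful filling-in rather than a new argument.
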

\begin{proof}
By Theorem \ref{th:1}, we can get a kernel with less than $2ck$ vertices.
Let $c'$ be the minimum power of 2 such that $c\leq c'$. Observe that $c'<2c$ and thus by Lemma \ref{lemm6}
we get a kernel with $|E(G)|\le f(c',k,2ck)< f(2c,k,2ck)=16c^{2}k-6ck$. 
\end{proof}

We now consider an approximation algorithm for the $c$-{\sc Load Coloring} Problem: Given a graph $G$ and integer $c$, we wish to determine the maximum $k$, denoted $k_{opt}$, for which $G\in (c,k)$-LCP. 
We define the approximation ratio  $r(c)=\frac{k_{opt}}{k}$, where $k$ is the output of the approximation algorithm.

Let $K(c)k$
 be an upper bound of the number of edges in a kernel for $(c,k)$-{\sc Load Coloring} and
let $P(c)=\prod_{i=1}^c\frac{K(i)}{i}$. For $c=1$, we may assume that $K(1)=1$ as $(1,k)$-{\sc Load Coloring} is trivially polynomial time solvable. Hence $P(1)=1.$
For $c \ge 2$, we have $K(c) = 16c^2 - 6c$.

\begin{theorem}\label{theo:app} There is a $2^{c-1} P(c)$-approximation algorithm for $c$-{\sc Load Coloring}. 
\end{theorem}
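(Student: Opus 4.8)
The plan is to sandwich $k_{opt}$ between a cheap upper bound and a constructive lower bound produced by the algorithm, and then to let a recursion on the number of colours generate the factor $2^{c-1}P(c)$. First I would record the trivial upper bound on the optimum: in any feasible $c$-coloring the edges coloured $i$ are pairwise distinct across the $c$ colours, so $c\cdot k_{opt}\le m$ and hence $k_{opt}\le m/c$, where $m=|E(G)|$.

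Next comes the lower bound on the value returned. The point is that Lemma \ref{lemm6} (via Theorem \ref{thm:edgeKernel}) is constructive: a graph dense enough in edges can be explicitly $c$-coloured with $k$ edges per colour. The algorithm is recursive on $c$: at each level it isolates one colour class (or, in order to invoke Lemma \ref{lemm6}, a block of colours), rounds the current number of colours up to the next power of two so that the hypothesis of Lemma \ref{lemm6} applies — a step that costs a factor at most $2$, since the next power of two is less than $2c$, exactly as in the proof of Theorem \ref{thm:edgeKernel} — and then recurses on the residual graph with fewer colours. Writing $g(c)=c\,2^{c-1}P(c)$, I would prove by induction that the algorithm outputs a value $k\ge m/g(c)$: the base case $c=1$ returns $k=m=m/g(1)$, and the inductive step multiplies the denominator by the per-level factor $2K(c)/(c-1)$, the $K(c)$ coming from the $c$-colour edge-kernel bound and the $2$ from the power-of-two rounding. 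Combining with $k_{opt}\le m/c$ then yields $r(c)=k_{opt}/k\le g(c)/c=2^{c-1}P(c)$.

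The main obstacle is the $n$-dependent slack term $2n(c-1)$ in $f(c,k,n)$: the one-shot edge bound only certifies membership once the vertex count satisfies $n\le 2ck$, which is not available a priori for a sparse input and cannot be guaranteed without first reducing vertices — and the vertex reductions $R_{i,k}$ themselves depend on $k$ and lower $c$. Handling the vertex-heavy regime is therefore forced to go through the recursion, and the delicate point is to show that peeling off one colour (equivalently, splitting off its vertex class, in the spirit of Lemma \ref{lem:bi} and Lemma \ref{lemm6}) discards only a controlled number of cross edges, so that the residual graph still carries a roughly $\tfrac{c-1}{c}$-fraction of the edges needed for the remaining colours. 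It is precisely this controlled-but-nontrivial edge loss at each of the $c-1$ levels that replaces the naive single-factor ratio $K(c)/c$ by the product $2^{c-1}P(c)$.
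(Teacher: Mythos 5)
Your upper bound $k_{opt}\le m/c$ and the use of the edge-kernel bound in the dense regime match the paper, but the core of your plan --- an inductive guarantee that the algorithm outputs $k\ge m/g(c)$ with $g(c)=c\,2^{c-1}P(c)$ --- is false, and no amount of care about cross-edge loss at each level can repair it. Any valid output must satisfy $k\le k_{opt}$, and $k_{opt}$ is not bounded below by any function of $m$ alone: for the star $K_{1,m}$ with $c=2$ we have $k_{opt}=0$ while $m$ is arbitrary, since every edge meets the centre and so only one colour class can ever receive an edge. This is precisely the obstacle phenomenon (an obstacle from $O_{1,k}$) that the paper's reduction rules $R_{i,k}$ exist to remove; the $n$-dependent slack $2n(c-1)$ in $f(c,k,n)$ is not a technical nuisance you can recurse around, it reflects the fact that edge density alone never certifies membership. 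Your colour-peeling recursion also cannot handle the sparse vertex-heavy regime: a perfect matching on $n\ge 2ck$ vertices is a Yes-instance (Lemma \ref{lem:starkernel}), yet $m=n/2 < f(c,k,n)$ for every $k\ge 1$ and $c\ge 2$, and peeling off a colour does not make the $n$-term go away at lower levels, so your tools can never certify such instances.

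The paper's recursion runs on a different quantity: the reduction rules, not the colours. Set $k=\lfloor m/K(c)\rfloor$. If $n\le 2ck$, the edge kernel (Theorem \ref{thm:edgeKernel}) certifies $G\in (c,k)$-LCP; if $n\ge 2ck$ and $G$ is reduced, then Theorem \ref{th:1} --- the vertex kernel, which your proposal never invokes --- already gives $G\in (c,k)$-LCP with no edge count needed; otherwise some rule $R_{i,k}$ applies, producing $(G',c')$ with $c'<c$, and one recurses there. The delicate case is when the recursive output $k'$, and even $k'_{opt}$, fall below $k$: the ratio is then rescued not by edge counting but by an exactness argument --- the obstacle removed for parameter $k$ is still an obstacle for parameter $k'_{opt}+1\le k$, so by Lemma \ref{lem2} $G\notin (c,k'_{opt}+1)$-LCP and hence $k_{opt}=k'_{opt}$ exactly, making the recursively obtained $k'$ a valid approximation for $G$ itself. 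Without this mechanism, or something equivalent, obstacle-laden instances defeat any guarantee of the form $k\ge m/g(c)$, which is where your proposal breaks.
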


\begin{proof}
We prove the claim by induction on $c$.
For $c = 1$, we have $P(1) = 1$.
Assume the lemma is true for all $c' < c$.

Let $G$ be an instance for $c$-{\sc Load Coloring} with $n$ vertices and $m$ edges. We may assume that $G$ has no isolated vertices. Clearly, $k_{opt}\le \frac{m}{c}$. Consider $k = \lfloor \frac{m}{K(c)} \rfloor$. 

If $k=0$, then $m<K(c)$ and we can find $k_{opt}$ in $O(1)$ time.

Now let $k > 0$. If $n \le 2ck$, then by the proof of Theorem \ref{thm:edgeKernel}, since $m \ge K(c)k$, $G\in (c,k)$-LCP. 
So we return $k$, and $\frac{k_{opt}}{k} \le \frac{m}{ck} \le \frac{K(c)(k+1)}{ck} \le \frac{2K(c)}{c} \le 2^{c-1}P(c)$.

If $n \ge 2ck$ and $G$ is reduced for $(c,k)$-LCP, then by Theorem \ref{th:1}, $G\in (c,k)$-LCP and we return $k$ as above.
If $n \ge 2ck$ and $G$ is not reduced for $(c,k)$-LCP, we can use Lemma \ref{poly} to reduce $(G,c)$ to $(G',c')$ with $c' < c$.
By induction we may find $k'$ such that $k'_{opt} \leq 2^{c'-1} P(c') k',$ where $k'_{opt}$ is the optimal solution for $c'$-{\sc Load Coloring} on $G'.$
Now consider three cases.

\begin{itemize}

\item $k' \ge k$. Then $G' \in (c',k)$-LCP and so $G \in (c,k)$-LCP. This is also a Yes-Instance case which leads to the same conclusion.

\item $k'_{opt} \le 2^{c'-1} P(c') k' < k$. Because $k'_{opt}+1 \leq 
k$, an obstacle from $O_{c-c',k}$ is also an obstacle from 
$O_{c-c',k'_{opt}+1}$, therefore $G'$ can be derived from $G$ using a
reduction rule for $(c,k'_{opt}+1)$-LCP.
Since $G' \not \in (c',k'_{opt}+1)$-LCP, $G \not \in 
(c,k'_{opt}+1)$-LCP.
Thus $k_{opt} = k'_{opt}$. The algorithm may output $k'$ which 
satisfies $k_{opt} = k'_{opt} \leq 2^{c'-1} P(c') k' \le 2^{c-1} P(c) k$.

\item $k' < k \leq 2^{c'-1} P(c')k'$. The algorithm gives $k'$ as an approximation of $k_{opt}$. Then $\frac{k_{opt}}{k'} \le \frac{m}{ck'} \le \frac{K(c)(k+1)}{ck'} \le  \frac{K(c)}{c} \frac{2k}{k'} \le \frac{K(c)}{c} 2^{c'} P(c') \le 2^{c-1} P(c)$.

\end{itemize}

In every case, the approximation ratio is at most $2^{c-1} P(c)$.
\end{proof}

\section{Number of Edges in Kernel for $c = 2$}\label{sec:ek2}

In this section, we look into the edge kernel problem for the special case when $c=2$. By doing a refined analysis, we will give a kernel with less than $8k$ edges for $(2,k)$-LCP, which is a better bound than the general one.
Henceforth, we assume that $G$ is reduced for $(2,k)$-LCP, and just consider the case when $|V(G)|<4k$, as we have proved that if $|V(G)|\ge 4k$ then $G\in (2,k)$-LCP.
\begin{lemma}\label{lem:smallComponents}
If $G$ has at least $3k-2$ edges and  every component in $G$ has less than $k$ edges then $G \in (2,k)$-LCP.
\end{lemma}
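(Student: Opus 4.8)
The plan is to reuse the component-packing idea from the proof of Lemma~\ref{lem:starkernel}, but in a simpler form: since here I only need to split the edges into two monochromatic classes of size at least $k$ each, I can ignore vertex counts entirely and just track edges. First I would dispose of the trivial case $k=0$ (every graph lies in $(2,0)$-LCP), so assume $k\ge 1$; the hypothesis $m\ge 3k-2\ge k$ then guarantees that the total number of edges is at least $k$.

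Let $C_1,\dots,C_t$ be the components of $G$. As in Lemma~\ref{lem:starkernel}, I would let $b$ be the minimum nonnegative integer for which there is a set $I\subseteq[t]$ with $\sum_{i\in I}|E(C_i)|=k+b\ge k$; such a set exists because $I=[t]$ already contributes at least $k$ edges. The crucial estimate is $b\le k-2$. Since $G$ is reduced for $(2,k)$-LCP, rule $R_{0,k}$ is inapplicable, so $G$ has no isolated vertices and every component carries at least one edge. Minimality of $b$ then forces $|E(C_i)|>b$ for each $i\in I$: otherwise discarding some $C_i$ with $1\le |E(C_i)|\le b$ from $I$ would leave at least $k$ edges while strictly lowering $b$, a contradiction. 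Combined with the hypothesis $|E(C_i)|\le k-1$, this yields $b< k-1$, i.e.\ $b\le k-2$.

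To finish, I would colour every vertex of $\bigcup_{i\in I}C_i$ with colour $1$; as each such component is then monochromatic, colour $1$ receives exactly $k+b\ge k$ edges. The remaining components carry at least $m-(k+b)\ge (3k-2)-(k+b)=2k-2-b\ge k$ edges, using $b\le k-2$, so colouring all of them with colour $2$ gives colour $2$ at least $k$ monochromatic edges as well, proving $G\in(2,k)$-LCP. This argument is essentially book-keeping, so there is no single difficult step; the only point requiring care is the bound $b\le k-2$, which is exactly what makes the threshold $3k-2$ tight. Indeed, forming the first colour class may overshoot $k$ by as much as $b\le k-2$ edges, leaving at least $k$ edges for the second class precisely when $m\ge 2k+b$, i.e.\ when $m\ge 3k-2$.
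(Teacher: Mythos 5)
Your proof is correct, but it takes a genuinely different route from the paper's. The paper's own argument is an extremal swap argument: it restricts attention to colorings in which every component is monochromatic, picks one with $|E_1|\ge|E_2|$ minimizing the discrepancy $||E_1|-|E_2||$, and shows that if $|E_2|\le k-1$ then recoloring a single component colored $1$ (which has fewer than $k$ edges) strictly decreases the discrepancy, a contradiction; hence both color classes already hold at least $k$ edges. You instead transplant the minimal-overshoot packing device from Lemma~\ref{lem:starkernel}: choose $I$ with edge total exactly $k+b$ for the minimum $b\ge 0$, prove $b\le k-2$, and hand the complement to the second color. Both arguments are sound and of comparable length; yours is more explicitly constructive and makes the tightness of the threshold $3k-2$ transparent (the first class may overshoot $k$ by up to $k-2$), and it unifies this lemma conceptually with Lemma~\ref{lem:starkernel}. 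Two small remarks. First, your appeal to reducedness (inapplicability of $R_{0,k}$) to rule out edgeless components in $I$ is legitimate under the section's standing assumption that $G$ is reduced, but it is not actually needed, and the lemma as stated has no such hypothesis: since the edge total $k+b\ge k\ge 1$ is positive, $I$ contains some component with at least one edge, and minimality applied to that one component already gives $b<|E(C_i)|\le k-1$; alternatively, edgeless components can simply be discarded from $I$ without affecting anything. Second, the full minimality of $b$ is stronger than necessary here: unlike in Lemma~\ref{lem:starkernel}, where vertex counts had to be controlled via the ratio argument, only edge counts matter in this lemma, so a greedy selection---add components one by one until the total first reaches $k$---overshoots by at most $k-2$ automatically, because the last component added contributes at most $k-1$ edges.
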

\begin{proof}
We consider colorings of the graph such that vertices in the same component are colored with the same color. Thus every edge in the graph is colored with  1 or  2. Denote the set of edges colored $i$ with $E_{i}, i=1,2.$ Among all possible colorings, choose a coloring of the graph such that $|E_1| \geq |E_2|$ and $||E_{1}|-|E_{2}||$ is minimum.
Suppose $|E_{2}| \leq k-1$, then $|E_{1}| \geq 2k-1$, $||E_{1}|-|E_{2}||>k$.  Changing the color of one component from 1 to 2, we get a new coloring of the graph. For the new coloring, denote the set of edges colored $i$ with $E_{i}', i=1,2$. Since each component has less than $k$ edges, $|E_{1}|>|E_{1}'|\geq k, |E_{2}'|\leq 2k-2$. So $||E_{1}'|-|E_{2}'||<||E_{1}|-|E_{2}||$, a contradiction. Therefore we have $|E_{1}| \geq |E_{2}|\geq k$, so $G \in (2,k)$-LCP.
\end{proof}

If $G$ has at least two components, each with at least $k$ edges, it is obviously a Yes-instance.  
 Therefore by Lemma \ref{lem:smallComponents}, we may assume there is exactly one component $C$ with at least $k$ edges in the graph. Denote the total number of edges in $G-V(C)$  with $m'$.
Observe that if $m' \geq k$, trivially $G \in (2,k)$-LCP. So assume that $m' < k$.

\begin{lemma}
 If $G$ is a reduced graph for $(2,k)$-LCP, $m' < k$ and $\Delta=\Delta(G) \geq3k-2m'$, then $G \in (2,k)$-LCP.
\end{lemma}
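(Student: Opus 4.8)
The plan is to exhibit an explicit $2$-coloring directly. Let $v$ be a vertex of degree $\Delta$. Since every component of $G-V(C)$ has fewer than $k$ edges, every vertex outside $C$ has degree less than $k$; as $m'<k$ gives $\Delta\ge 3k-2m'>k$, the vertex $v$ must lie in $C$. The idea is to dedicate one colour (say colour $2$) to the $m'$ edges lying outside $C$ together with a few edges drawn from $C-v$, and to reserve the other colour (colour $1$) for the star at $v$. The threshold $\Delta\ge 3k-2m'$ is exactly what is needed so that, after spending some neighbours of $v$ on colour $2$, enough star edges remain for colour $1$.

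Concretely, I would first colour all of $G-V(C)$ with colour $2$, securing $m'$ edges of colour $2$. Next I would select $k-m'$ edges inside $C-v$ and colour their endpoints (a set $W$ with $|W|\le 2(k-m')$) with colour $2$; these $k-m'$ edges then become monochromatic of colour $2$, so colour $2$ receives at least $(k-m')+m'=k$ edges, all distinct since one part lies inside $C$ and the other outside. Finally I would colour $v$ and every remaining vertex of $C$ with colour $1$. Each neighbour of $v$ outside $W$ then yields an edge of colour $1$ incident to $v$, and since at most $|W\cap N(v)|\le |W|\le 2(k-m')$ neighbours are lost to $W$, colour $1$ receives at least $\Delta-2(k-m')\ge (3k-2m')-2(k-m')=k$ edges. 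Both colour classes thus carry at least $k$ edges, so $G\in(2,k)$-LCP.

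The step that needs justification, and which I expect to be the crux, is that $C-v$ really contains at least $k-m'$ edges available for selection. This is where the reducedness of $G$ enters. Because rule $R_{1,k}$ cannot be applied, $v$ has fewer than $k$ leaf-neighbours: otherwise $(\{v\},L)$, with $L$ the set of leaf-neighbours of $v$, would be an obstacle from $O_{1,k}$ and $R_{1,k}$ would apply, since here $c=2>1$. Hence at least $\Delta-k+1$ neighbours of $v$ have degree at least $2$, so each contributes an edge of $E(C-v)$; as every such edge covers at most two of these neighbours, $|E(C-v)|\ge (\Delta-k+1)/2\ge k-m'+\tfrac12$, which is more than enough. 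The delicate balance is precisely the trade-off between consuming at most $2(k-m')$ neighbours of $v$ for colour $2$ while still guaranteeing that $C-v$ supplies the required $k-m'$ edges; both sides of this trade-off are governed by the hypothesis $\Delta\ge 3k-2m'$, so I expect the inequalities to close exactly with no slack to spare.
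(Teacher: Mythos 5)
Your proof is correct and follows essentially the same approach as the paper: both take a maximum-degree vertex in $C$, use reducedness under $R_{1,k}$ to bound its leaf-neighbours by $k-1$, devote one colour to the $m'$ edges outside $C$ plus $k-m'$ edges of $C$ not incident to that vertex, and give its star the other colour, closing with the same arithmetic $\Delta-2(k-m')\ge k$. The only difference is mechanical: the paper selects $k-m'$ non-leaf neighbours of the high-degree vertex and pairs each with a second endpoint, whereas you select $k-m'$ distinct edges of $E(C-v)$ outright (justified by your counting bound $|E(C-v)|\ge(\Delta-k+1)/2$), which is if anything the more careful bookkeeping, since it cannot double-count an edge in the event that two selected neighbours choose each other as their second endpoint.
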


\begin{proof}
Let $u$ be one of the vertices with degree $\Delta$ and $N(u)$ its neighbors. Because the graph is reduced by Reduction Rule $R_{1,k}$, $u$ has at least $2k-2m'$ neighbors which are not leaves. Arbitrarily select $k-m'$ vertices among them and for each one, select any neighbor but $u$. Color the selected vertices and $G-V(C)$  by 1. By construction, there are at least $k$ edges  colored 1 and there are at most $2k-2m'$ colored vertices in $N(u)$. So there are at least $k$ uncolored vertices in $N(u)$. We color them and $u$ with 2. So $G \in (2,k)$-LCP.
\end{proof}

The next lemma deals with the case $\Delta=\Delta(G)<3k$.  

\begin{lemma}
Let $G$ be a graph with $\Delta<3k$ and $|E(G)| \ge 8k$, then  $G \in (2,k)$-LCP. 
\end{lemma}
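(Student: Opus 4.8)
The plan is to reduce the statement to finding two \emph{vertex-disjoint} subgraphs, each with at least $k$ edges: once we have disjoint vertex sets $A_1,A_2$ with $E(A_1)\ge k$ and $E(A_2)\ge k$, we color $A_1$ with $1$, $A_2$ with $2$, and color every remaining vertex arbitrarily; this never destroys a monochromatic edge inside $A_1$ or $A_2$, so each color still carries at least $k$ edges. I would work inside the single large component $C$ guaranteed by the standing assumptions of this section (the other components contribute only $m'<k$ edges, $G$ is reduced, so there are no isolated vertices and no vertex has $k$ or more leaf-neighbors). Since $m\ge 8k$ and $m'<k$, we have $|E(C)|>7k$, and the degree bound $\Delta<3k$ will be used repeatedly to control overshoot.

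First I would build the first color class greedily inside $C$: starting from a single edge, repeatedly add a vertex having a neighbor in the current set until the number of induced edges first reaches $k$, obtaining a connected $A$ with $E(A)\ge k$. If $u$ is the last vertex added, then $E(A)\le (k-1)+\deg_A(u)\le (k-1)+\Delta<4k$ because $\Delta<3k$, and since each addition gains at least one edge we also have $|A|\le E(A)+1<4k$. Setting $B=V(C)\setminus A$, we have $|E(C)|=E(A)+E(B)+\mathrm{cut}(A,B)$, hence $E(B)=|E(C)|-E(A)-\mathrm{cut}(A,B)>7k-4k-\mathrm{cut}(A,B)$. So \emph{if we can guarantee $\mathrm{cut}(A,B)\le 3k$, then $E(B)>k$ and we are done}; note that the budget $4k+3k+k=8k$ is exactly what forces the threshold in the statement.

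The main obstacle is therefore bounding $\mathrm{cut}(A,B)$, since a priori $\sum_{v\in A}\deg(v)$ (and hence the cut) could be as large as $|A|\,\Delta=\Theta(k^2)$. To control it I would (i) refine the choice of $A$, taking among all minimal sets with $E(A)\ge k$ one that in addition minimizes the number of edges incident to $A$, exploiting the connectivity of $C$ so that $A$ is grown locally with a small boundary; and (ii) treat the residual bad case $E(B)<k$ separately. In that case almost all edges are incident to $A$, so $B$ is nearly independent and the cut edges form a bipartite graph between $A$ and $B$ with more than $3k$ edges and maximum degree below $3k$. Here a vertex $a\in A$ with many private neighbors in $B$ can be recolored exactly in the spirit of the preceding lemma for the large-$\Delta$ case (splitting the neighbors of $a$ between the two colors), while if no such vertex exists the cut is spread out and Lemma~\ref{lem:bi} applied to this bipartite graph directly produces the two required color classes. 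I expect this dichotomy — converting a large cut either into a local recoloring or into an application of the bipartite bound — to be the crux, whereas the greedy construction and the arithmetic $4k+3k+k=8k$ are routine.
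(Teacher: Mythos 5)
Your opening is the same as the paper's: grow $A$ greedily inside the big component until it first induces at least $k$ edges, put $B=V(G)\setminus A$, and observe that everything reduces to the case $|E(B)|<k$, i.e.\ to handling a large cut. But that case is the entire difficulty, and neither of your two mechanisms for it works as stated. For the ``spread out'' branch, Lemma~\ref{lem:bi} with $c=2$ requires the bipartite cut graph to have at least $b(2,k,n')=4k+n'$ edges, where $n'$ is the number of vertices it spans; your own budget ($|E(A)|<4k$, $|E(B)|<k$, $m\ge 8k$) guarantees only more than $3k$ cut edges, and a spread-out cut has \emph{many} vertices (up to nearly $4k$, since this section assumes $n<4k$), pushing the threshold further away, not closer. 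Worse, no argument using cut edges alone can close this branch: if the cut is a double star in which $a\in A$ has $2k+2$ private neighbors and $a'\in A$ has $k-1$, then the cut has $3k+1>3k$ edges and maximum degree below $3k$, yet no vertex partition splits the cut edges into two classes of $\ge k$ each — any valid coloring must combine cut edges with edges inside $A$ and $B$. For the ``many private neighbors'' branch, the analogy with the preceding lemma does not carry over: that lemma uses the hypotheses $\Delta\ge 3k-2m'$ and the reduced property, and the real obstruction here is that spending $a\in A$ (and some of its $B$-neighbors) on color~2 removes $a$'s internal edges from color~1, and nothing in your construction guarantees $|E(A\setminus\{a\})|\ge k$ or that the compensating cut edges from $A\setminus\{a\}$ avoid the vertices given to color~2. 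Quantifying and reconciling these simultaneous requirements is precisely the missing proof.

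The paper resolves this with two ingredients your sketch lacks. First, it strengthens the minimality of $A$: it arranges $|E(A)|=k+d$ with every $u\in A$ satisfying $|N_A(u)|>d$, and proves $|A|+d\le k+3$; this makes $|E(A)|$ close to $k$ rather than merely below $4k$, and that slack is consumed in every one of the final counting contradictions against the $8k$ budget (e.g.\ the bound $7k+d+|A|-5<8k$), so your weaker $|E(A)|<4k$ estimate cannot support any such contradiction. Second, instead of trying to bound the cut, it runs a four-set refinement $A_1,A_2,B_1,B_2$ (with $A_2$ seeded by the vertex of $A$ of maximum degree into $B$), moving one vertex at a time while maintaining invariants such as $|E(A_1,B_1)|\ge k+|A_1|$; every terminal state of this process either exhibits two disjoint sets, each inducing at least $k$ edges by mixing internal and cut edges, or contradicts $m\ge 8k$. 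Your proposal correctly locates the crux but does not contain the machinery that actually resolves it, so as it stands there is a genuine gap.
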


\begin{proof}

Because of Lemma \ref{lem:smallComponents}, we may assume there exists a connected component $C$ with at least $k$ edges. In this component, choose a minimal set $A \subseteq V(C)$ such that $|A|\leq k+1$ and $|E(A)|= k+d\geq k$. We may find such a set $A$ in the following way. Select arbitrarily a vertex in $C$ and put it into $A$, then keep adding to this set some neighbor of some vertex in $A$ until $|E(A)| = k+d \geq k$. Since each time we select a neighbor of $A$ we strictly increase $|E(A)|$, $|A| \le k+1$. If there is any vertex $u \in A$ with $|N_{A}(u)|\leq d$, then $A'=A\setminus \{u\}$ is a smaller vertex set such that $|E(A')|\geq k$. Thus, we may remove such vertices until $|E(A)| = k+d$ and for each vertex $u \in A$, $|N_A(u)|>d$. Denote $B=V(G)\setminus A$. We may assume $|E(B)|<k$, as otherwise $G \in (2,k)$-LCP.

We now show that $|A|+d\leq k+3$. Since every vertex $u \in A$ has $d_{A}(u) > d$, $|E(A)| = \frac{1}{2}\Sigma_{u \in A}d_{A}(u)\geq \frac{d+1}{2}|A|$. We have $k+d =|E(A)| \geq \frac{d+1}{2}|A|$, thus $|A| \leq \frac{2(k+d)}{d+1}$. Moreover as $d \leq |A|-1$, $$d+|A| \leq 2|A|-1 \leq \frac{4(k+d)}{d+1}-1 < \frac{4k}{d+1}+3$$ If $d\geq 3$, we have our result, otherwise $d \le 2$ and $d +|A| \le 2+k+1 = k+3$.

Let $A_1,A_2,B_1,B_2$ be a partition of $V(G)$ such that
$A = A_1 \cup A_2$, $B = B_1 \cup B_2$, $|A_2| = 1$ and $|E(A, B_2)| < 2k$. Such a partition is possible: let $y = \argmax\{|N_{B}(u)|: u \in A\}$ and initially take $A_1 = A\setminus \{y\}, A_2 = \{y\}, B_1=B, B_2 = \emptyset$. Suppose $|E(A_1,B_1)| \le k + |A_1|$ then $|E(G)| \le |E(A)| + |E(B)| + |E(A_1,B_1)| + |E(A_2,B_1)| \le (k+d)+(k-1)+(k+|A|-1)+\Delta \le 7k+1$, a contradiction since $|E(G)| > 8k$. So, $|E(A_1,B_1)| > k + |A_1|$. We will consider two cases:  $\max \{|N_{B_1}(u)|:\ u \in A\}$ is greater than $k$ or not.

If so, observe that $|E(A_2,B_1)| = |E(\{y\},B_1)| = \max \{|N_{B_1}(u)| : u \in A\} > k$. Move all vertices of $B_1\setminus N(y)$ to $B_2$. We still have $|E(\{y\},B_1)| > k$ and $|E(\{y\},B_2)| = 0$. Moreover $B_1 \subseteq N(y)$. If $|E(A_1,B_2)| \ge k$, then $G$ is in $(2,k)$-LCP, thus $|E(A_1,B_2)| < k$. While $|E(\{y\},B_1)| \ge k+1$ and $|E(A_1,B_1)| \ge k+|A_1|$, move an arbitrary vertex from $B_1$ to $B_2$. After each move, $|E(\{y\},B_1)| \ge k$ and $|E(A_1,B_1)| \ge k$, thus $|E(A_2,B_2)| < k$ and $|E(A_1,B_2)| < k$ as otherwise, $G$ would be in $(2,k)$-LCP.

Eventually, we have $|E(A_1,B_1)| < k+|A_1|$ or $|E(\{y\},B_1)| = k$. Suppose $|E(A_1,B_1)| < k+|A_1|$, then $|E(G)| \le |E(A)| + |E(B)| + |E(A_1,B_1)| + |E(A_1,B_2)| + |E(\{y\},B)| \le (k+d) + (k-1) + (k+|A_1|-1) + (k-1) + \Delta \le 4k-3+(d+|A|)+\Delta < 8k$, a contradiction. Thus, $|E(A_1,B_1)| \ge k+|A_1|$ and $|E(\{y\},B_1)| = k$. As $B_1 \subseteq N(y)$, we have $|B_1| = k$. We have found a new partition with the wanted properties and with $\max \{|N_{B_1}(u)|:\ u \in A\} \le |B_1| = k$.

We now study the case $\max \{|N_{B_1}(u)| : u \in A\} \le k$. While there exists $u \in B_1$ such that $|E(A, B_2 \cup \{u\})| < 2k$, move $u$ from $B_1$ to $B_2$. 
Then, (if and) while $|E(A_1,B_1)| \ge k + |A_1|$ and $|E(A_2,B_1)| < k + |A_2|$, move an arbitrary vertex from $A_1$ to $A_2$. 

After all such moves, suppose that $|E(A_1,B_1)| < k+|A_1|$. If $|A_2| = 1$, we have $|E(A_2,B_1)| \le \max \{|N_{B_1}(u)| : u \in A\} \le k$, otherwise we moved some vertices from $A_1$ to $A_2$. Let $u$ be the last one. Since $|E(A_2\setminus \{u\},B_1)| < k + |A_2\setminus \{u\}|$, we know $|E(A_2,B_1)| \leq |E(A_2\setminus \{u\},B_1)| + \max \{|N_{B_1}(u)|:\ u \in A\} < k+|A_2|-1+k = 2k+|A_2|-1$. For both cases, $|E(G)| = |E(A)| + |E(B)| + |E(A_1,B_1)| + |E(A_2,B_1)| + |E(A,B_2)| \le (k+d)+(k-1)+(k+|A_1|-1)+(2k+|A_2|-2)+(2k-1) \le 7k + d + |A| -5 < 8k$, which is impossible. 

So, $|E(A_1,B_1)| \ge k+|A_1|$ which implies $|E(A_2,B_1)| \ge k+|A_2|$. For any vertex $u \in B_1$, we have $|E(A_1,B_1\setminus \{u\})| \ge k$ and $|E(A_2,B_1\setminus \{u\})| \ge k$ and we also obtain $|E(A, B_2 \cup \{u\})| \ge 2k$, i.e $E(A_1,B_2 \cup \{u\})$ or $E(A_2,B_2 \cup \{u\})$ has at least $k$ edges. Thus $G\in (2,k)$-LCP.
\end{proof}

The lemmas of this section and the fact that their proofs can be turned into polynomial algorithms, imply the following:

\begin{theorem}\label{thm3}
If $G$ is reduced for $(2,k)$-LCP and has at least $8k$ edges, then $G \in (2,k)$-LCP. Thus, $(2,k)$-{\sc Load Coloring} admits a kernel with less than $8k$ edges.
\end{theorem}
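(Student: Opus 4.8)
The plan is to establish the first assertion by a case analysis that glues together the four lemmas of this section, each of which handles a disjoint regime, and then to bootstrap the resulting dichotomy (\emph{reduced} together with \emph{at least $8k$ edges} forces a Yes-instance) into a kernelization.

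For the first assertion, assume $G$ is reduced for $(2,k)$-LCP with $|E(G)| \ge 8k$. First I would dispose of the case in which every component of $G$ has fewer than $k$ edges: since $8k \ge 3k-2$, Lemma \ref{lem:smallComponents} immediately gives $G \in (2,k)$-LCP. Otherwise some component has at least $k$ edges; if two components do, $G$ is trivially a Yes-instance, so I may fix a single large component $C$ and set $m' = |E(G - V(C))|$. If $m' \ge k$, then colouring $C$ with one colour and $G - V(C)$ with the other settles it, so I assume $m' < k$. Now I split on the maximum degree. If $\Delta(G) \ge 3k - 2m'$, the lemma with hypotheses ($G$ reduced, $m' < k$, $\Delta \ge 3k - 2m'$) applies; otherwise $\Delta(G) < 3k - 2m' \le 3k$, using $m' \ge 0$, so the final lemma ($\Delta < 3k$, $|E(G)| \ge 8k$) applies. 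These cases are exhaustive, hence $G \in (2,k)$-LCP in all of them.

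For the kernelization, I would use Lemmas \ref{lem2} and \ref{poly} to apply the reduction rules $R_{i,k}$ exhaustively in polynomial time, mapping the input $(G,2)$ to a reduced instance $(G',c')$ with $c' \le 2$ and $G \in (2,k)$-LCP if and only if $G' \in (c',k)$-LCP. Since a rule $R_{i,k}$ decreasing the colour count by $i$ is applied only when $i$ is strictly less than the current number of colours, $c'$ never drops below $1$, so $c' \in \{1,2\}$. If $c' = 1$, I decide $|E(G')| \ge k$ in polynomial time and output a trivial equivalent instance; if $c' = 2$ and $|E(G')| \ge 8k$, the first assertion makes $G'$ a Yes-instance, and I again output a trivial equivalent instance. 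In the only remaining case, $c' = 2$ with $|E(G')| < 8k$, the reduced instance $G'$ is itself the kernel and has fewer than $8k$ edges.

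I expect the substantive difficulty to lie entirely in the four lemmas already proved; the theorem is essentially their assembly. The one point that genuinely needs care is the exhaustiveness of the degree split — specifically the observation $3k - 2m' \le 3k$, which lets the failure of $\Delta \ge 3k - 2m'$ feed directly into the $\Delta < 3k$ lemma — together with checking that each lemma's proof is constructive and runs in polynomial time, so that the reduce-then-classify procedure is a bona fide kernelization rather than merely a nonconstructive existence statement.
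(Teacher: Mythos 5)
Your proposal is correct and takes essentially the same approach as the paper: the paper's proof of this theorem is exactly the assembly of the section's lemmas that you spell out (all components small; two large components or $m'\ge k$ trivial; the reduced/high-degree lemma when $\Delta \ge 3k-2m'$; the $\Delta<3k$, $|E(G)|\ge 8k$ lemma otherwise), together with the standard reduce-then-classify kernelization argument. The only difference is that the paper leaves this case analysis implicit in a one-sentence proof, whereas you make it explicit.
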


Since we have a better bound for the number of edges in a kernel when $c=2$, we may get a better approximation when $c=2$.

\begin{theorem}
For every $\varepsilon>0$, there is a $(4+\varepsilon)$-approximation algorithm for $2$-{\sc Load Coloring}.
\end{theorem}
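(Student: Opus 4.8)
The plan is to imitate the proof of Theorem \ref{theo:app}, specialised to $c=2$, but to replace the generic edge-kernel constant $K(2)=16\cdot 2^{2}-6\cdot 2=52$ by the sharper constant $K(2)=8$ supplied by Theorem \ref{thm3}, and to absorb the rounding losses (which are exactly what produces the factor $2^{c-1}$ in Theorem \ref{theo:app}) into the slack $\varepsilon$ by solving all sufficiently small instances exactly.

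First I would preprocess $G$ by deleting its isolated vertices; this changes neither $k_{opt}$ nor $m$, and afterwards $n\le 2m$. Fix a threshold $M(\varepsilon)=\lceil 8(4+\varepsilon)/\varepsilon\rceil=O(1/\varepsilon)$. If $m<M(\varepsilon)$, then $n<2M(\varepsilon)$ and we can enumerate all $2^{n}=2^{O(1/\varepsilon)}$ colourings, returning $k_{opt}$ exactly (ratio $1$); for fixed $\varepsilon$ this is constant time. For the main case $m\ge M(\varepsilon)$, set $k=\lfloor m/8\rfloor$, so that $m\ge 8k$ and $k\ge 1$, and use Lemma \ref{poly} to test whether $G$ is reduced for $(2,k)$-LCP. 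Since the only non-trivial rule available for $c=2$ is $R_{1,k}$ (isolated vertices, i.e.\ $R_{0,k}$, having been removed), there are just two situations. If $G$ is reduced, then $m\ge 8k$ together with Theorem \ref{thm3} gives $G\in(2,k)$-LCP, and I return $k$. If $G$ is not reduced, then $R_{1,k}$ applies: some vertex $v$ has at least $k$ leaf-neighbours, and by Lemma \ref{lem2} the deletion of $v$ together with its leaf-neighbours yields $G'$ with $G\in(2,k)$-LCP if and only if $G'\in(1,k)$-LCP.

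As $1$-{\sc Load Coloring} is solved exactly by $k'_{opt}=|E(G')|$, I split on whether $k'_{opt}\ge k$. If $k'_{opt}\ge k$, then $G'\in(1,k)$-LCP, so by Lemma \ref{lem2} $G\in(2,k)$-LCP and I return $k$. If $k'_{opt}<k$, then --- and this is the key step borrowed from the proof of Theorem \ref{theo:app} --- because $v$ still has more than $k'_{opt}$ leaf-neighbours, the same obstacle is an obstacle from $O_{1,k'_{opt}+1}$ as well as from $O_{1,k'_{opt}}$; hence Lemma \ref{lem2} gives $G\in(2,k'_{opt})$-LCP while $G\notin(2,k'_{opt}+1)$-LCP, so $k_{opt}=k'_{opt}$ is returned exactly. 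Note that the third case of Theorem \ref{theo:app}, namely $k'<k\le 2^{c'-1}P(c')k'$, is vacuous here, since $c'=1$ forces $2^{c'-1}P(c')=1$.

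It remains to bound the ratio in the only lossy branches, those returning $k=\lfloor m/8\rfloor$ with $k\le k_{opt}$. There $k_{opt}\le m/2$ and $k>m/8-1=(m-8)/8$, so
\[
\frac{k_{opt}}{k}<\frac{m/2}{(m-8)/8}=\frac{4m}{m-8}\le 4+\varepsilon,
\]
the last inequality holding precisely because $m\ge M(\varepsilon)$. Thus every branch returns a value within a factor $4+\varepsilon$ of $k_{opt}$ in polynomial time. The delicate point, as in Theorem \ref{theo:app}, is the reduction branch: one must verify that the $k$-dependent rule $R_{1,k}$ is reinterpretable for the parameter $k'_{opt}+1$ so that the exact optimum is recovered when $G'$ turns out to be small. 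Everything else is routine bookkeeping of the constant $8$ from Theorem \ref{thm3} against the trivial bound $k_{opt}\le m/2$.
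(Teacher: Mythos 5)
Your proposal is correct and takes essentially the same route as the paper's own proof: both handle instances with $m=O(1/\varepsilon)$ edges exactly by brute force, set $k=\lfloor m/8\rfloor$ and return $k$ when Theorem \ref{thm3} (the $8k$-edge kernel) or the reduction to $G'$ with $|E(G')|\ge k$ certifies $G\in(2,k)$-LCP, and otherwise recover $k_{opt}=|E(G')|$ exactly via the observation that the obstacle from $O_{1,k}$ is also an obstacle from $O_{1,k'_{opt}+1}$ (and $O_{1,k'_{opt}}$), so that Lemma \ref{lem2} applies at the reparameterised value. The only differences are cosmetic: you threshold on $m$ rather than on $p=\lfloor m/8\rfloor$, preprocess isolated vertices explicitly, and spell out the direction $k_{opt}\ge k'_{opt}$ that the paper leaves implicit.
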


\begin{proof}
Let $G$ be an instance for $2$-{\sc Load Coloring} with $m = 8p+q$ edges, where $0 \le q < 8$. Let $k_{opt}$ be the optimal solution of $2$-{\sc Load Coloring} on $G$, and observe that  $k_{opt}\le \lfloor \frac{m}{2} \rfloor \le 4p+3$. Let $p_0=\lceil \frac{3}{\varepsilon}\rceil$.
If $p\le p_0-1$ then we can find $k_{opt}$ in $O(1)$ time.


So assume that $p \geq p_0$. Note that $\frac{k_{opt}}{p}\leq \frac{4p+3}{p}\leq 4+\varepsilon$. If $G$ is reduced for $(2,p)$-LCP, $G \in (2,p)$-LCP by Theorem \ref{thm3}, and so $p$ gives the required approximation.  We may assume that $G$ is not reduced for $(2,p)$-LCP and reduce $G$ to $G'$.  If $|E(G')| \ge p$, then $G' \in (1,p)$-LCP, and by Lemma \ref{lem2}, $G \in (2,p)$-LCP. Again, $p$ gives the required approximation.

Now assume that $|E(G')| < p$ and let $k'_{opt} = |E(G')|$ be the optimal solution of $1$-{\sc Load Coloring} on $G'$. Then $k'_{opt}+1 \leq p$ and so an obstacle from $O_{1,p}$ is also an obstacle from $O_{1,k'_{opt}+1}$. Therefore, $G'$ can be derived from $G$ using a reduction rule for $(2,k'_{opt}+1)$-LCP. Since $G' \not \in (1,k'_{opt}+1)$-LCP, $G \not \in (2,k'_{opt}+1)$-LCP.
Thus $k_{opt} = k'_{opt} = |E(G')|$. So let our algorithm output $|E(G')|$ in this case. 
\end{proof}

\section{Discussions}\label{sec:d}

In the {\sc Judicious Bipartition} Problem (see, e.g., the survey \cite{S2005}), given a graph $G$, we are asked to find a bipartition $V_1,V_2$ of $V(G)$ which minimizes $\max\{|E(V_1)|,|E(V_2)|\}$. To see that {\sc Judicious Bipartition} and $2$-{\sc Load Coloring} are different problems, following \cite{NABA2007} consider $2nK_2$, the union of $2n$ disjoint edges, and observe that while the solution of $2$-{\sc Load Coloring} is $n$, that of {\sc Judicious Bipartition} is zero. 

To the best of our knowlege, we obtained the first linear-edge kernel for a nontrivial problem on general graphs. As we could see, such kernels can be used to obtain approximation algorithms. It would be interesting to obtain such kernels for other nontrivial problems.



\end{document}